\keywords{infinite duration games, positionality, Borel class \bsigma{2}, history determinism}
\newcommand{\re}[1]{\xrightarrow{#1}}
\newcommand{\rp}[1]{\overset{#1}{\rightsquigarrow}}
\newcommand{\tand}{\text{ and }}
\newcommand{\tin}{\text{ in }}
\newcommand{\Parity}{\text{Parity}}
\newcommand{\eps}{\varepsilon}
\newcommand{\VE}{V_{\text{Eve}}}
\newcommand{\VA}{V_{\text{Adam}}}
\newcommand{\N}{\mathbb N}
\newcommand{\Z}{\mathbb Z}
\newcommand{\fin}[1]{W_{\mathrm{fin}}}
\newcommand{\MP}{\text{Mean-Payoff}}
\renewcommand{\mp}{\text{mp}}
\newcommand{\Bounded}{\text{Bounded}}
\newcommand{\normal}{\mathcal{N}\xspace}
\newcommand{\cobuchi}{\mathcal{F}\xspace}
\newcommand{\coBuchi}{\text{co-Büchi}}
\newcommand{\reb}[1]{%
  \mathrel{\ooalign{\hfil$\vcenter{
    \hbox{$\scriptscriptstyle\bullet$}}$\hfil\cr$\xrightarrow{#1}$\cr}
  }%
}
\renewcommand{\ss}{\mathrm{s}}
\newcommand{\ENI}{\text{ENI}}
\newcommand{\END}{\text{END}}
\newcommand{\Finite}{\text{Finite}}
\newcommand{\TB}[1]{\text{Tilted-Bounded}_{#1}}
\newcommand{\ident}[1]{\ensuremath{\texorpdfstring{\mathrm{#1}}{#1}}\xspace}
\newcommand{\lang}{\ident{L}}
\newcommand{\boldclass}[3]{\texorpdfstring{\ensuremath{\mathbf{#1}^{#2}_{#3}}}{Borel}}
\newcommand{\bsigma}[1]{\boldclass{\Sigma}{0}{#1}}
\newcommand{\bdelta}[1]{\boldclass{\Delta}{0}{#1}}
\theoremstyle{plain}
\newtheorem{theorem}[thm]{Theorem}
\newtheorem{lemma}[thm]{Lemma}
\newtheorem{conjecture}[thm]{Conjecture}
\newtheorem{claim}[thm]{Claim}
\newtheorem{proposition}[thm]{Proposition}
\crefname{conjecture}{Conjecture}{Conjectures}
\begin{document}

% If the title is longer than 55 characters, then specify a shorter running title as the optional argument to \title. The running title should be roughyl at most 55 characters:
\title{Positionality in \texorpdfstring{$\bsigma{2}$}{Sigma02} and a completeness result}
\thanks{We thank Antonio Casares and Lorenzo Clemente for discussions on and around the topic.\\
The first author was supported by the European Research Council (grant agreement No 948057 — BOBR).\\
The second author was supported by the National Science Centre, Poland (grant no.\@ 2021/\allowbreak41/\allowbreak B/\allowbreak ST6/\allowbreak03914).}

% affiliations are numbered automatically with a, b, c (see below)
% use the optional argument to indicate the affiliation(s) of each author
% omit the argument if there is only one author, or only one affiliation
\author[P.~Ohlmann]{Pierre Ohlmann\lmcsorcid{0000-0002-4685-5253}}[a]
\author[M.~Skrzypczak]{Michał Skrzypczak\lmcsorcid{0000-0002-9647-4993}}[b]

% affiliation 1 (automatically numbered a)
\address{CNRS, Aix-Marseille Université, Laboratoire d'Informatique et des Systèmes (LIS), France}	%optional
% write emails for all authors having that affiliation
\email{pierre.ohlmann@lis-lab.fr}  %optional

% affiliation 2 (automatically numbered b)
\address{Institute of Informatics, University of Warsaw, Poland}	%optional
\email{mskrzypczak@mimuw.edu.pl}  %optional

%\author{Pierre Ohlmann}{Institute of Informatics, University of Warsaw, Poland}{pohlmann@mimuw.edu.pl}{https://orcid.org/0000-0002-4685-5253}{Author supported by the European Research Council (grant agreement No 948057 — BOBR).}
%TODO mandatory, please use full name; only 1 author per \author macro; first two parameters are mandatory, other parameters can be empty. Please provide at least the name of the affiliation and the country. The full address is optional. Use additional curly braces to indicate the correct name splitting when the last name consists of multiple name parts.
%\author{Michał Skrzypczak}{Institute of Informatics, University of Warsaw, Poland}{mskrzypczak@mimuw.edu.pl}{https://orcid.org/0000-0002-9647-4993}{Author supported by the National Science Centre, Poland (grant no.\@ 2021/\allowbreak41/\allowbreak B/\allowbreak ST6/\allowbreak03914).}

\begin{abstract}
We study the existence of positional strategies for the protagonist in infinite duration games over arbitrary game graphs.
We prove that prefix-independent objectives in $\bsigma{2}$ which are positional and admit a (strongly) neutral letter are exactly those that are recognised by history-deterministic monotone co-Büchi automata over countable ordinals.
This generalises a criterion proposed by [Kopczy\'nski, ICALP 2006] and gives an alternative proof of closure under union for these objectives, which was known from [Ohlmann, TheoretiCS 2023].

We then give two applications of our result.
First, we prove that the mean-payoff objective is positional over arbitrary game graphs.
Second, we establish the following completeness result: for any objective $W$ which is prefix-independent, admits a (weakly) neutral letter, and is positional over finite game graphs, there is an objective $W'$ which is equivalent to $W$ over finite game graphs and positional over arbitrary game graphs.
\end{abstract}

\maketitle

\section{Introduction}

\subsection{Context}

Before presenting our contributions, we introduce the necessary background.

\paragraph*{Games.}
We study infinite duration games on graphs. In such a game, two players, Eve and Adam, alternate forever in moving a token along the edges of a directed, possibly infinite graph (called \emph{arena}), whose edges are labelled with elements of some set $C$.
An \emph{objective} $W \subseteq C^\omega$ is specified in advance; Eve wins the game if the label of the produced infinite path belongs to $W$.
A \emph{strategy} in such a game is called \emph{positional} if it depends only on~the~current vertex occupied by the token, regardless of the history of the play.

We are interested in \emph{positional objectives}: those for which existence of a winning strategy for Eve entails existence of a winning positional strategy for Eve, on a arbitrary arena. Sometimes we also consider a weaker property: an objective is \emph{positional over finite arenas} if the above implication holds on any finite arena.

\paragraph*{Early results.}
Although the notion of positionality is already present in Shapley's seminal work~\cite{Shapley53}, the first positionality result for infinite duration games was established by Ehrenfeucht and Mycielsky~\cite{EM79}, and it concerns the mean-payoff objective
\[
    \MP_{\leq 0} = \Big\{w_0 w_1 \dots \in \Z^\omega \mid \limsup_k \frac 1 k \sum_{i=0}^{k-1} w_i \leq 0\Big\},
\]
over finite arenas.
Nowadays, many proofs are known that establish positionality of mean\=/payoff games over finite arenas.

Later, and in a different context, Emerson and Jutla~\cite{EJ91} as well as Mostowski~\cite{Mostowski:1991} independently established positionality of the parity objective
\[
    \Parity_d = \Big\{p_0 p_1 \dots \in \{0,1,\dots,d\}^\omega \mid \limsup_k p_k \text{ is even}\Big\}
\]
over arbitrary arenas.
This result was used to give a direct proof of the possibility of complementing automata over infinite trees, which is the key step in modern proofs of Rabin's theorem on decidability of S2S~\cite{Rabin69}.
By now, several proofs are known for positionality of parity games, some of which apply to arbitrary arenas.

Both parity games and mean\=/payoff games have been the object of considerable attention over the past three decades; we refer to~\cite{NathBook} for a thorough exposition.
By symmetry, these games are positional not only for Eve but also for the opponent, a property we call \emph{bi-positionality}.
Parity and mean\=/payoff objectives, as well as the vast majority of objectives that are considered in this context, are \emph{prefix\=/independent}, that is, invariant under adding or removing finite prefixes.

\paragraph*{Bi-positionality.}
Many efforts were devoted to understanding positionality in the early 2000's.
These culminated in Gimbert and Zielonka's work~\cite{GZ05} establishing a general characterisation of bi\=/positional objectives over finite arenas, from which it follows that an objective is bi\=/positional over finite arenas if and only if it is the case for 1\=/player games.
On the other hand, Colcombet and Niwi\'nski~\cite{CN06} established that bi\=/positionality over arbitrary arenas is very restrictive: any prefix\=/independent objective which is bi-positional over arbitrary arenas can be recast as a parity objective.

Together, these two results give a good understanding of bi\=/positional objectives, both over finite and arbitrary arenas.

\paragraph*{Positionality for Eve.}
In contrast, less is known about those objectives which are positional for Eve, regardless of the opponent (this is sometimes called half\=/positionality).
This is somewhat surprising, considering that positionality is more in\=/line with the primary application in synthesis of reactive systems, where the opponent, who models an antagonistic environment, need not have structured strategies.
The thesis of Kopczy\'nski~\cite{Kopczynskithesis} proposes a number of results on positionality, but no characterisation.
Kopczy\'nski proposed two classes of prefix\=/independent objectives, \emph{concave objectives} and \emph{monotone objectives}, which are positional respectively over finite and over arbitrary arenas.
Both classes are closed under unions, which motivated the following conjecture.

\begin{conjecture}[Kopczy\'nski's conjecture~\cite{Kopczynskithesis, Kopczynski06}]\label{conj:kop}
Prefix\=/independent positional objectives are closed under unions.
\end{conjecture}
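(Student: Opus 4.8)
The plan is to route the question through a structural characterisation of positionality rather than attack it head-on. For a prefix-independent objective $W$ admitting a (strongly) neutral letter, the right tool is Ohlmann's notion of a well-founded monotone universal graph: $W$ is positional if and only if there is a graph $\mathcal U_W$, linearly ordered by a well-order, whose edge relation is monotone with respect to that order, and into which every $W$-graph of bounded out-degree embeds by a label-preserving morphism. Granting this equivalence, Kopczyński's conjecture in this setting reduces to a purely combinatorial problem: given monotone universal graphs $\mathcal U_W$ and $\mathcal U_{W'}$, construct a monotone, well-founded, $(W \cup W')$-universal graph out of them. In the $\bsigma{2}$ world the reduction can equivalently be phrased through automata --- from history-deterministic monotone $\coBuchi$ automata over countable ordinals recognising $W$ and $W'$, build one recognising $W \cup W'$ --- and this is exactly the content of the alternative proof of closure under union mentioned in the abstract.

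The key steps, in order, would be: (i) establish (or import) the universal-graph and monotone-$\coBuchi$-automaton characterisation, turning positionality into a concrete structural property; (ii) propose a combination construction and verify that it preserves well-foundedness and monotonicity; (iii) prove universality of the combination, i.e.\ that every $(W \cup W')$-graph admits a morphism into it. The delicate point throughout is that in a $(W \cup W')$-graph every infinite path lies in $W$ \emph{or} in $W'$ but no finite prefix reveals which, so a morphism must simultaneously hedge on both possibilities; a naive attempt --- a lexicographic product of $\mathcal U_W$ and $\mathcal U_{W'}$, or a ``charge a bad move only when both runs misbehave'' product of the two automata --- fails, the former because lexicographic order does not yield coordinatewise comparisons so monotonicity breaks, the latter because the two runs can misbehave in alternation so the combined run would wrongly accept. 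The genuine construction must instead use the ordinal ranks of the two universal graphs to decide, dynamically, which automaton to trust, charging a bad move only on the corresponding rank decreases; this is where the $\bsigma{2}$/$\coBuchi$ structure, history-determinism, and countable-ordinal state spaces all come in. Getting this combination right is, to my mind, the crux.

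The main obstacle is twofold. First, step (i) --- characterising prefix-independent, neutral-letter positional objectives, and in particular identifying the $\bsigma{2}$ ones as exactly those recognised by history-deterministic monotone $\coBuchi$ automata over countable ordinals --- is itself the substantive work, and is plausibly the hardest part of the whole programme; without it there is simply no handle on the class. Second, and decisive for the conjecture in full generality, is the reliance on a neutral letter: both the universal-graph machinery and the $\coBuchi$ presentation exploit the ability to splice in a free letter, and I would expect the unrestricted conjecture to be exactly where the difficulty concentrates --- possibly to the point of failing --- so an honest attack should in parallel search for counterexamples among positional objectives with no neutral letter, or among objectives of Borel complexity beyond $\bsigma{2}$, where the convenient $\coBuchi$ description of monotonicity and acceptance is unavailable and a genuinely new idea would be required. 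As a concrete warm-up one could first try to show that adjoining a parity or mean-payoff objective to an arbitrary positional one preserves positionality.
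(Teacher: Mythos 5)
You should first be clear about the status of the statement: in this paper it is a \emph{conjecture}, not a theorem. The paper records that Kozachinskiy refuted it over finite arenas and that it remains open over arbitrary arenas; what the paper actually proves is the restricted Corollary~\ref{cor:closure-union} (countable unions of prefix-independent $\bsigma{2}$ objectives, each admitting a strongly neutral letter). Your proposal, read honestly, is a programme for exactly that restricted case, and you say so yourself; so the distance between your attempt and the full statement is not a flaw peculiar to your argument but an intrinsic limitation the paper shares. Within the restricted setting your plan coincides with the paper's: your step (i) is Theorem~\ref{thm:charac-sigma2} (proved via Lemmas~\ref{lem:positionality_to_structure} and~\ref{lem:automata_to_positionality}, which rest on the structuration results and on almost-universal graphs for trees), and your steps (ii)--(iii) are the content of the proof of Corollary~\ref{cor:closure-union}.

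Where your sketch is materially vaguer than the paper is the combination step, which you yourself call the crux. The paper does not arbitrate between the automata by comparing ordinal ranks: it takes the disjoint union of saturated automata $A_0,A_1,\dots$, adds all co-Büchi transitions and all normal transitions from $A_i$ to $A_j$ for $i>j$ (this layering is what keeps the union well-founded and monotone), and then builds a resolver that runs all the individual resolvers synchronously and \emph{round-robins} over which one it currently trusts, charging a co-Büchi transition exactly when the trusted resolver misbehaves. The schedule $R_0;\,R_0,R_1;\,R_0,R_1,R_2;\dots$ is the precise mechanism defeating the ``alternating misbehaviour'' failure you correctly identify: if $w\in\lang(A_n)$ and the combined run were rejecting, the schedule would return to $R_n$ infinitely often and each departure from $R_n$ would witness a co-Büchi transition of $R_n$, contradicting soundness of $R_n$. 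Your ``charge only on the corresponding rank decrease'' arbitration is not worked out and is not what is needed; without a concrete scheduler and its soundness proof you have reduced the problem but not solved it. Two smaller corrections: universality here is \emph{almost}-universality for trees (some subtree $T[t]$ must embed), not universality for bounded-out-degree graphs; and the neutral letter is needed only for the direction ``positional $\Rightarrow$ automaton'' (Lemma~\ref{lem:positionality_to_structure}), not for the converse Lemma~\ref{lem:automata_to_positionality}. Your closing remarks --- that the unrestricted conjecture may concentrate its difficulty, or fail, outside the neutral-letter and $\bsigma{2}$ regimes --- are consistent with the paper's own framing of what remains open.
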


This conjecture was disproved by Kozachinskiy in the case of finite arenas~\cite{Kozachinskiy22}, however, it remains open for arbitrary ones (even in the case of countable unions instead of unions).

\paragraph*{Neutral letters.}

Many of the considered objectives contain a \emph{neutral letter}, that is an~element $\eps\in C$ such that $W$ is invariant under removing arbitrary many occurrences of the letter $\eps$ from any infinite word. For instance, $\eps=0$ is a neutral letter of the parity objective $\Parity_d$. There are two variants of this definition, \emph{strongly neutral letter} and \emph{weakly neutral letter}, which are formally introduced in the preliminaries.
It an open question whether adding a neutral letter to a given objective may affect its positionality~\cite{Kopczynskithesis,Ohlmann23}, although Casares and Ohlmann established that this is not the case for $\omega$-regular objectives~\cite{CO24}.

Neutral letters are typically used when one wants to modify a~given game arena, by allowing players to make some additional decisions.
This requires to create intermediate edges in such a~way that their labels do not affect the overall outcome of the play.
For this reason, a number of results (including those in this paper) are conditional on the presence of a (strong or weak) neutral letter.
For more discussion on neutral letters and their role in the study of memory, see also~\cite{CO25}.

\paragraph*{Borel classes.}

To stratify the complexity of the considered objectives we use the Borel hierarchy~\cite{kechris_descriptive}. This follows the classical approach to Gale\=/Stewart games~\cite{gale_games}, where the determinacy theorem was gradually proved for more and more complex Borel classes: $\bsigma{2}$ in~\cite{wolfe_determinacy} and $\bsigma{3}$ in~\cite{davis_infinite_games}. This finally led to Martin's celebrated result on all Borel objectives~\cite{martin_borel_determinacy}.

To apply this technique, we assume for the rest of the paper that $C$ is at most countable. Thus, $C^\omega$ is a Polish topological space, with open sets of the form $L\cdot C^\omega$ where $L\subseteq C^\ast$ is arbitrary. Closed sets are those whose complement is open. The class $\bsigma{2}$ contains all sets which can be obtained as a~countable union of some closed sets.

\paragraph*{Recent developments.}
A step forward in the study of positionality (for Eve) was recently made by Ohlmann~\cite{Ohlmann23} who established that an objective admitting a (strongly) neutral letter is positional over arbitrary arenas if and only if it admits well\=/ordered monotone universal graphs.
Note that this characterisation concerns only positionality over arbitrary arenas.
This allowed Ohlmann to prove closure of prefix\=/independent positional objectives (over arbitrary arenas) admitting a (strongly) neutral letter under finite lexicographic products, and, further assuming membership in $\bsigma{2}$, under finite unions\footnote{In~\cite{Ohlmann23}, an assumption called ``non-healing'' is used. This assumption is in fact implied by membership in $\bsigma{2}$.}.

Bouyer, Casares, Randour, and Vandenhove~\cite{BCRV22} also used universal graphs to characterise positionality for objectives recognised by deterministic B\"uchi automata. They observed that for such an objective $W$ finiteness of the arena does not impact positionality: $W$ is positional over arbitrary arenas if and only if it is positional over finite ones.

Going further, Casares~\cite{CasaresThesis} recently proposed a characterisation of positionality for all $\omega$\=/regular objectives.
As a by\=/product, it follows that Conjecture~\ref{conj:kop} holds for $\omega$\=/regular objectives\footnote{In fact, Casares proved a strengthening of the conjecture when only one objective is required to be prefix-independent.}, and that again finiteness of the arena does not impact positionality.

\subsection{Contributions}

We now present our contributions.

\paragraph*{Positionality in $\bsigma{2}$.}

As mentioned above, Kopczy\'nski introduced the class of \emph{monotonic objectives}, defined as those of the form $C^\omega \setminus L^\omega$, where $L$ is a language recognised by a finite linearly\=/ordered automaton with certain monotonicity properties on transitions. He then proved that monotonic objectives are positional over arbitrary arenas.
Such objectives are prefix\=/independent and belong to $\bsigma{2}$; our first contribution is to extend Kopczy\'nski's result to a complete characterisation (up to neutral letters) of positional objectives in $\bsigma{2}$.

\begin{restatable}[]{theorem}{charac}\label{thm:charac-sigma2}
Let $W \subseteq C^\omega$ be a prefix\=/independent $\bsigma{2}$ objective admitting a strongly neutral letter.
Then $W$ is positional over arbitrary arenas if and only if it is recognised by a countable history\=/deterministic well\=/founded monotone co\=/Büchi automaton.

%If $W$ admits a strongly neutral letter and $W$ is positional over arbitrary arenas then $W$ is recognised by a~countable history\=/deterministic well\=/founded monotone co\=/Büchi automaton.

%Conversely, if $W$ is recognised by a~countable history\=/deterministic well\=/founded monotone co\=/Büchi automaton then $W$ is positional over arbitrary arenas.
\end{restatable}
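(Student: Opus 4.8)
The plan is to prove the two implications separately, using Ohlmann's universal-graph characterisation~\cite{Ohlmann23} as a bridge. For the \emph{harder} (left-to-right) direction, assume $W$ is prefix-independent, $\bsigma{2}$, has a strongly neutral letter, and is positional over arbitrary arenas. By~\cite{Ohlmann23}, $W$ admits a well-ordered monotone universal graph $U$; since $U$ is well-ordered, its vertex set can be identified with an ordinal, and I would first argue that it may be taken to be countable — this should follow from a Löwenheim–Skolem / downward-closure argument exploiting that $C$ is countable and $W$ is $\bsigma{2}$, so that an appropriate countable sub-structure remains universal. From such a $U$ I want to \emph{read off} a history-deterministic automaton: the state space is (a suitable quotient of) the ordinal, transitions follow the monotone edge relation of $U$, the acceptance condition is co-Büchi, and well-foundedness of the ordinal is exactly what forbids infinitely many ``bad'' (rejecting) transitions while still accepting all of $W$. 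The key technical point is that $\bsigma{2}$ (a countable union of closed sets, equivalently the class of objectives where membership is witnessed by the play eventually staying inside one closed ``safe'' set) is precisely what lets the universal graph be converted into a structure with a co-Büchi acceptance condition rather than something more complex; monotonicity of $U$ gives monotonicity of the automaton, and universality gives history-determinism (Eve's strategy in the universality game becomes the resolver).

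For the \emph{easier} (right-to-left) direction, suppose $W$ is recognised by a countable history-deterministic well-founded monotone co-Büchi automaton $\mathcal{A}$. I would build a monotone universal graph for $W$ directly from $\mathcal{A}$: take the graph whose vertices are the states of $\mathcal{A}$ (ordered by the linear order on states), with an edge $q \re{c} q'$ whenever $\mathcal A$ has a transition from $q$ on $c$ to some state $\ge q'$ that is ``safe enough'' — the well-foundedness of the order together with the co-Büchi condition ensures that infinite paths in this graph that use infinitely many unsafe transitions cannot descend forever, hence correspond to accepting runs, i.e. words in $W$; conversely every word of $W$ embeds because $\mathcal A$ accepts it and is history-deterministic, so the resolver produces a run that translates into a graph homomorphism. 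Well-foundedness of the graph is inherited from well-foundedness of the state order. Applying~\cite{Ohlmann23} in the reverse direction then yields that $W$ is positional over arbitrary arenas. (As a sanity check, this recovers Kopczyński's theorem, since a finite linearly-ordered monotone automaton is in particular a countable well-founded monotone one, and the complement of $L^\omega$ for such $L$ is co-Büchi recognisable.)

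I expect the \textbf{main obstacle} to be the left-to-right direction, specifically extracting a \emph{co-Büchi} history-deterministic automaton — rather than a merely Borel device — from the universal graph, and doing the bookkeeping that keeps everything countable and well-founded simultaneously. The delicate interplay is that universality of $U$ is a statement about homomorphisms from arbitrary (uncountable) arenas, whereas we want to distil from it a countable automaton; reconciling these requires care, presumably via a compactness or reflection argument showing that a countable monotone sub-universal-graph suffices when $W \in \bsigma{2}$. A secondary subtlety is ensuring the resolver obtained from universality is genuinely a history-deterministic strategy for the automaton (it sees only the letters read so far, matching the information available to Eve when she navigates the universal graph online), and that the monotone structure of $U$ descends to the automaton transitions in the precise sense required by the definition of ``monotone co-Büchi automaton''. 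Once the automaton is in hand, verifying the co-Büchi acceptance condition matches $W$ should be a direct consequence of well-foundedness plus the $\bsigma{2}$ hypothesis.
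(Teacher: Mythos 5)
Your overall architecture (two separate directions, with Ohlmann's universal-graph technology as the bridge) matches the paper's, and your right-to-left direction is essentially the paper's: it strips the co-Büchi transitions to obtain a graph $U$, shows $U$ is almost $W$-universal for trees, and invokes Theorem~\ref{thm:structure_gives_positionality}. One detail you gloss over there: universality must be checked against \emph{trees} satisfying $W$, not single words, and the resolver only guarantees finitely many co-Büchi transitions along each branch; to obtain a homomorphism into the co-Büchi-free graph one needs a subtree avoiding them entirely, which the paper extracts by a König-style argument (the Claim inside Lemma~\ref{lem:automata_to_positionality}). Note also that this direction needs no neutral letter.

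The genuine gap is in your left-to-right direction, at the step ``the universal graph may be taken to be countable by a Löwenheim--Skolem argument''. This is unjustified, and it is not where the $\bsigma{2}$ hypothesis does its work. Universality quantifies over all (uncountably many) trees satisfying $W$, and an arbitrary countable substructure of a universal graph has no reason to remain universal. Worse, if your construction went through without using $\bsigma{2}$ beyond this point, then saturating any well-ordered monotone universal graph would show that \emph{every} positional prefix-independent objective with a strongly neutral letter is recognised by such an automaton; countability is precisely what fails in general, and it is exactly the place where $\bsigma{2}$ must enter. The paper's route is: since $W \in \bsigma{2}$, it is recognised by a \emph{countable deterministic} co-Büchi automaton $A$; the graph $G$ of its normal transitions satisfies $W$ and carries a suffix of every accepting run; applying the infinite structuration lemma (Lemma~\ref{lem:infinite-structuration}) to $G$ and restricting to the image of the resulting morphism yields a countable well-founded monotone graph $G'$, whose saturation is the desired automaton, with history-determinism coming for free from the morphism $A \to A'$ (Lemma~\ref{lem:morphisms_preserve_hd}). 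In other words, the countable object is obtained by restructuring the countable deterministic automaton, not by shrinking the universal graph; your proposal is missing this idea, and the resolver you hope to extract ``from the universality game'' is left unconstructed where the paper simply reuses $A$ itself.
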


The proof of Theorem~\ref{thm:charac-sigma2} is based on Ohlmann's \emph{structuration} technique which is the key ingredient to the proof of~\cite{Ohlmann23}.
As an easy by\=/product of the above characterisation, we reobtain the result that Kopczynski's conjecture holds for countable unions of $\bsigma{2}$ objectives (assuming that the given objectives all have strongly neutral letters).

\begin{restatable}[]{corollary}{closureunion}\label{cor:closure-union}%
If $W_0, W_1, \ldots$ are all positional prefix\=/independent $\bsigma{2}$ objectives, each admitting a strongly neutral letter (possibly a different one), then the union $\bigcup_{i\in\N} W_i$ is also positional.
\end{restatable}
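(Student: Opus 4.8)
\medskip

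The plan is to reduce the statement to Theorem~\ref{thm:charac-sigma2} by showing that the union $W = \bigcup_{i\in\N} W_i$ satisfies all the hypotheses of that theorem, and that the characterising condition (recognition by a countable history-deterministic well-founded monotone co-Büchi automaton) is preserved under countable unions. First I would check the easy closure properties: prefix-independence of $W$ is immediate since each $W_i$ is prefix-independent; and membership in $\bsigma{2}$ follows since $\bsigma{2}$ is closed under countable unions (a countable union of countable unions of closed sets is again a countable union of closed sets). The only subtlety here is the neutral letter: the $W_i$ need not share a common neutral letter, so I would first observe that one may harmlessly assume they do, by passing to a common alphabet and renaming — or, more carefully, by noting that if $\eps_i$ is strongly neutral for $W_i$, we can add a single fresh letter $\eps$ to $C$ and extend each $W_i$ so that $\eps$ becomes strongly neutral for all of them simultaneously, without affecting positionality (this is the kind of neutral-letter manipulation that the preliminaries set up).

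\medskip

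The core of the argument is then: by Theorem~\ref{thm:charac-sigma2}, each $W_i$ is recognised by a countable history-deterministic well-founded monotone co-Büchi automaton $\mathcal{A}_i$. I would construct an automaton $\mathcal{A}$ recognising $W = \bigcup_i W_i$ by taking an appropriate ordered "sum" or "product" of the $\mathcal{A}_i$, arranged so that the result is again well-founded, monotone, and history-deterministic, and still has countably many states. The natural candidate is a lexicographic-type construction on the (countable ordinal) state spaces: a run of $\mathcal{A}$ tracks runs of all the $\mathcal{A}_i$ in parallel, and accepts (in the co-Büchi sense — finitely many "bad" events) exactly when at least one of the tracked runs accepts. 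Monotonicity and well-foundedness of the combined order should follow from the corresponding properties of the factors, since countable ordinals are closed under countable sums and products and these operations respect the monotone structure on transitions. History-determinism of $\mathcal{A}$ follows from history-determinism of each $\mathcal{A}_i$: the resolver for $\mathcal{A}$ simulates all the resolvers for the $\mathcal{A}_i$ simultaneously, which is legitimate precisely because history-determinism quantifies a single strategy that works against all inputs.

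\medskip

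Once $\mathcal{A}$ is built and verified to be a countable history-deterministic well-founded monotone co-Büchi automaton recognising $W$, we apply the right-to-left direction of Theorem~\ref{thm:charac-sigma2} to conclude that $W$ is positional over arbitrary arenas, which is exactly the claim. The main obstacle I anticipate is the combination step: ensuring that the parallel/lexicographic construction genuinely stays within the class of monotone well-founded co-Büchi automata. In particular, one must be careful that (i) "co-Büchi" is preserved — accepting when some component accepts is the dual of a Büchi-type condition, so one has to check that the monotone structure lets the acceptance be phrased as "eventually never leave a designated well-founded part", and (ii) the ordering witnessing monotonicity on the combined transition structure is itself well-founded, which forces the use of ordinal sums/products rather than a naive disjoint union of orders. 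If the preliminaries provide (as is standard) that these automata classes are closed under the relevant algebraic operations on ordered state spaces, this step is routine; otherwise it is the one place where genuine work is needed, and it may be cleanest to prove it as a standalone lemma about monotone co-Büchi automata before deducing the corollary.
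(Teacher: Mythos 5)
Your high-level plan (extract automata for each $W_i$ from the characterisation, combine them, and feed the result into the automaton-to-positionality direction) is the same as the paper's, but the two places you yourself flag as needing care are precisely where your sketch does not go through as written. First, the combination step: an automaton that ``tracks runs of all the $\mathcal{A}_i$ in parallel and accepts when at least one of the tracked runs accepts'' is not a co-Büchi automaton --- over countably many components, ``some component sees finitely many bad events'' cannot be read off as ``the single product run sees finitely many bad events'', and no ordering of a product state space repairs this. The paper instead takes the \emph{ordered disjoint union} of the (saturated) automata $A_i$: all co-Büchi transitions are present, and normal transitions are added from $A_i$ to $A_j$ only for $i>j$, so any accepting run must eventually commit to a single block $A_i$; it is nondeterminism, not parallelism, that selects the witnessing $W_i$, and well-foundedness and monotonicity of the $\omega$-indexed sum of well-founded monotone blocks are immediate. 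History-determinism of this union then needs a concrete idea your sketch omits: the resolver does run all component resolvers synchronously, but it commits to one at a time following the round-robin schedule $R_0;R_0,R_1;R_0,R_1,R_2;\dots$, advancing (and emitting a co-Büchi transition) exactly when the currently designated resolver fails; soundness holds because otherwise the sound resolver $R_n$ for the true witness $W_n$ would itself take infinitely many co-Büchi transitions.

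Second, the neutral letter. Your proposed fix --- adjoin a fresh letter $\eps$ and extend each $W_i$ so that $\eps$ is strongly neutral ``without affecting positionality'' --- invokes exactly what the paper states in its introduction is an open problem (whether adding a neutral letter to an objective can affect its positionality), so it cannot be used in the direction you need it, namely to obtain automata for the modified objectives from their positionality. Fortunately the difficulty is a red herring: the proof of Theorem~\ref{thm:charac-sigma2} splits into Lemma~\ref{lem:positionality_to_structure}, which only uses the strongly neutral letter of each \emph{individual} $W_i$ (available by hypothesis), and Lemma~\ref{lem:automata_to_positionality}, which requires no neutral letter at all. The union $W$ therefore never needs a neutral letter; one should invoke the latter lemma directly rather than the packaged theorem.
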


\paragraph*{From finite to arbitrary arenas.}

%All the results mentioned in the introduction suggest that positionality (for Eve) over arbitrary arenas is better behaved than over finite arenas.

The most important natural example of an objective which is positional over finite arenas but not on infinite ones is $\MP_{\leq 0}$, as defined above.
As a straightforward consequence of its positionality~\cite[Theorem~3]{BCDGR11}, it holds that over finite arenas, $\MP_{\leq 0}$ coincides with the energy condition
\[
    \Bounded=\Big\{w_0 w_1 \dots \in \Z^\omega \mid \sup_{k}\sum_{i=0}^{k-1} w_i \text{ is finite}\Big\},
\]
which turns out to be positional even over arbitrary arenas~\cite{Ohlmann23}.

Applying Corollary~\ref{cor:closure-union}, we establish that with strict threshold, the mean-payoff objective
\[
    \MP_{<0} = \Big\{w_0 w_1 \dots \in \Z^\omega \mid \limsup_k \frac 1 k \sum_{i=0}^{k-1} w_i < 0\Big\}
\]
is in fact positional over arbitrary arenas.

Now say that two prefix\=/independent objectives are \emph{finitely equivalent}, written $W \equiv W'$, if they are won by Eve over the same finite arenas.
As observed above, $\MP_{\leq 0} \equiv \Bounded$, which is positional over arbitrary arenas.
Likewise, its complement
\[
    \Z^\omega \setminus \MP_{\leq 0} = \Big\{w_0w_1 \dots \in \Z^\omega \mid \limsup_k \frac 1 k \sum_{i=0}^{k-1} w_i > 0 \Big\}
\] 
is, up to changing each weight $w\in \Z$ by the opposite one ${-}w\in\Z$, isomorphic to
\[
    \Big\{w_0w_1 \dots \in \Z^\omega \mid \liminf_k \frac 1 k \sum_{i=0}^{k-1} w_i < 0 \Big\}.
\]
The latter condition is finitely equivalent to $\MP_{<0}$ (where the liminf is replaced with a limsup), which, as explained above, turns out to be positional over arbitrary arenas.

Thus, both $\MP_{\leq 0}$ and its complement are finitely equivalent to objectives that are positional over arbitrary arenas.
This brings us to our main contribution, which generalises the above observation to any prefix\=/independent objective admitting a (weakly) neutral letter which is positional over finite arenas.

\begin{restatable}[]{theorem}{main}\label{thm:main}
Let $W \subseteq C^\omega$ be a prefix\=/independent objective which is positional over finite arenas and admits a weakly neutral letter.
Then there exists a prefix\=/independent objective $W'\subseteq W$ such that $W' \equiv W$ and $W'$ is positional over arbitrary arenas.
\end{restatable}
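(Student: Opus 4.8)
The plan is to build $W'$ as a countable union of $\bsigma{2}$ objectives, each admitting a strongly neutral letter and positional over arbitrary arenas, so that Corollary~\ref{cor:closure-union} yields positionality of $W'$, and then to argue that this union is finitely equivalent to $W$. The guiding intuition is the mean-payoff example discussed above: there, $\MP_{\leq 0}$ is not in $\bsigma{2}$, but it is finitely equivalent to $\Bounded$, which is a $\bsigma{2}$ objective with a neutral letter that happens to already be positional over all arenas. We want to reproduce this phenomenon abstractly. The key observation is that positionality over finite arenas is witnessed by \emph{finite} universal-style structures: by the results surveyed in the introduction (and the universal-graph machinery of~\cite{Ohlmann23,BCRV22}), for each $n$ there is a finite object $U_n$ such that Eve wins $W$ over an arena of size $\leq n$ if and only if she can map it homomorphically into $U_n$ respecting $W$. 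Since each $U_n$ is finite, the set of $\omega$-words that ``embed into $U_n$'' (more precisely, that label some infinite path in $U_n$ avoiding the forbidden behaviour) is an $\omega$-regular, hence $\bsigma{2}$-approximable, prefix-independent objective.

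Concretely, first I would set up, for each finite bound $n$, a prefix-independent objective $W_n \supseteq W$ obtained by ``relaxing'' $W$ to the behaviour forced only by arenas of size $n$; the natural choice is $W_n = \bigcap \{ W(G,\sigma) : G \text{ finite}, |G|\le n, \sigma \text{ a positional winning Eve strategy} \}$, or equivalently the language of paths in the finite universal graph for size-$n$ games. Second, I would verify that each $W_n$ is prefix-independent (inherited from $W$ being so, together with prefix-independence of the universal graph), lies in $\bsigma{2}$ (it is $\omega$-regular, being recognised by a finite automaton built from $U_n$), admits a strongly neutral letter (push the weakly neutral letter of $W$ through the construction, turning it into a self-loop in $U_n$ labelled $\eps$ — this is where the weak-versus-strong distinction has to be handled carefully, possibly by first replacing $W$ with its ``neutral-letter completion'' and observing finite equivalence is preserved), and crucially is \emph{positional over arbitrary arenas}: this should follow because $U_n$ is itself a monotone well-founded (indeed finite) structure into which winning Eve strategies embed, so $W_n$ is recognised by a history-deterministic well-founded monotone co-Büchi automaton and Theorem~\ref{thm:charac-sigma2} applies. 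Third, I would set $W' = \bigcup_{n \in \N} W_n$ and invoke Corollary~\ref{cor:closure-union} to conclude $W'$ is positional over arbitrary arenas.

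Finally I would check $W' \equiv W$. The inclusion $W \subseteq W_n \subseteq W'$ gives that Eve winning $W$ over a finite arena implies she wins $W'$ there. For the converse, suppose Eve wins $W'$ over a finite arena $G$ with $|G| = n$; using positional determinacy for $\bsigma{2}$ games (Wolfe~\cite{wolfe_determinacy}) — or more simply the fact that on a fixed finite arena only finitely many of the $W_n$ differ — Eve wins $W_n$ over $G$ via a positional strategy, and then by construction of $W_n$ as exactly the behaviour certified by the size-$n$ universal graph, this positional strategy is already winning for $W$ on $G$. This last implication is the delicate point: it requires that $W_n$ restricted to arenas of size $n$ coincides with $W$, which is precisely the content of the finite-arena positionality of $W$ repackaged through universal graphs, and one must be careful that Adam's strategies are unrestricted throughout.

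I expect the main obstacle to be the second step: producing, from the bare hypothesis ``positional over finite arenas,'' a genuine sequence of \emph{finite monotone well-founded} structures $U_n$ that both capture $W$ on small arenas and satisfy the monotonicity/history-determinism requirements of Theorem~\ref{thm:charac-sigma2}. The existence of finite universal graphs for finite-arena positionality is known, but massaging them into the exact normal form (monotone, co-Büchi, strongly neutral letter) demanded by our characterisation — and doing so uniformly in $n$ so that the resulting $W_n$ are genuinely prefix-independent $\bsigma{2}$ objectives — is where the real work lies. The handling of the weakly-versus-strongly neutral letter, and ensuring that passing to $W'$ does not accidentally enlarge the set of finite arenas Eve wins, are the two subtleties I would be most careful about.
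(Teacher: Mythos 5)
There is a genuine gap, and it starts with the definition of the approximating objectives. You want $W_n \supseteq W$ (``relaxations'' cut out by finite universal structures), but you also describe $W_n$ as the language of paths in a finite graph $U_n$ satisfying $W$ --- and that language is a \emph{subset} of $W$, not a superset; these two descriptions are incompatible, and the superset version makes the finite-equivalence check go the wrong way (Eve winning $\bigcup_n W_n$ on a finite arena does not yield a single $n$ with $(A,W_n)$ won, and Wolfe's theorem gives determinacy, not positionality, so it does not help). The object that works is the subset version, taken over \emph{all} finite graphs at once: $\fin{W}=\bigcup\{\lang(G): G \text{ finite}, G \text{ satisfies } W\}$. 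This is contained in $W$, so one direction of $\fin{W}\equiv W$ is trivial, and the other direction is exactly where finite-arena positionality is used (a positional winning strategy is itself a finite graph satisfying $W$, hence already winning for $\fin{W}$). Each $\lang(G)$ is closed by K\"onig's lemma, so $\fin{W}\in\bsigma{2}$, and prefix-independence of $\fin{W}$ holds for the union (by prepending a fresh vertex to $G$) even though it fails for each piece.

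The second, more structural gap is the plan to conclude via Corollary~\ref{cor:closure-union}: its hypotheses require each piece of the union to be prefix-independent \emph{and} to admit a strongly neutral letter, and the pieces here (safety languages of finite graphs, or your $W_n$) satisfy neither; moreover $W$ itself only has a \emph{weakly} neutral letter, and promoting it to a strongly neutral one without changing positionality is exactly the open problem the paper flags. The paper's proof bypasses the corollary: it applies the finite structuration lemma (Lemma~\ref{lem:finite-structuration}, the only place the weakly neutral letter is used) to replace each finite graph by a finite \emph{monotone} one, enumerates these as $G_0,G_1,\dots$, and glues them into a single countable well-founded monotone co-B\"uchi automaton (normal transitions from $G_i$ to $G_j$ for $i>j$, saturated with co-B\"uchi transitions) which it shows directly to be history-deterministic and to recognise $\fin{W}$; positionality then follows from Lemma~\ref{lem:automata_to_positionality}, which requires no neutral letter at all. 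Note also that your single finite universal graph $U_n$ for all arenas of size $\le n$ need not exist when $C$ is infinite (as for $\Z$-labelled mean-payoff games): there are then infinitely many non-isomorphic finite graphs of size $\le n$ satisfying $W$, so their amalgamation is only countable, and the claimed $\omega$-regularity of $W_n$ fails. Your high-level intuition --- approximate $W$ by the behaviours certified by finite witnesses, yielding a $\bsigma{2}$ objective finitely equivalent to $W$ --- is the right one, but the route through Corollary~\ref{cor:closure-union} and finite universal graphs does not close.
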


%In fact, the objective $W'$ we propose can be explicitly defined from $W$, and belongs to $\bsigma{2}$ in general.
%Theorem~\ref{thm:main} follows from Theorem~\ref{thm:charac-sigma2} and the structuration technique.

\paragraph*{Structure of the paper} Section~\ref{sec:preliminaries} introduces all necessary notions, including Ohlmann's structurations results.
Section~\ref{sec:sigma2} proves our characterisation result Theorem~\ref{thm:charac-sigma2} and its consequence Corollary~\ref{cor:closure-union}, and provides a few examples.
Then we proceed in Section~\ref{sec:finite-to-infinite} with establishing positionality of $\MP_{<0}$ over arbitrary arenas, and proving Theorem~\ref{thm:main}.

\section{Preliminaries}\label{sec:preliminaries}

\paragraph*{Graphs.}
We fix a set of letters $C$, which we assume to be at most countable.
A \emph{$C$\=/graph} $G$ is comprised of a (potentially infinite) set of \emph{vertices} $V(G)$ together with a set of \emph{edges} $E(G) \subseteq V(G) \times C \times V(G)$.
An edge $e=(v,c,v')\in E(G)$ is written $v \re c v'$, with $c$ being the \emph{label} of this edge.
We say that $e$ is \emph{outgoing} from $v$, that it is \emph{incoming} to $v'$, and that it is \emph{adjacent} to both $v$ and to $v'$.
We assume that each vertex $v\in V(G)$ has at least one outgoing edge (we call this condition being \emph{sinkless}, with a~\emph{sink} understood as a~vertex where no outgoing edge is available).

We say that $G$ is \emph{finite} (resp.~\emph{countable}) if both $V(G)$ and $E(G)$ are finite (resp.~countable). The \emph{size} of a graph is defined to be $|G|=|V(G)|$. We write $\N$ and $\omega$ for the set of natural numbers $\{0,1,\ldots\}$. The symbol $\omega$ is used to emphasise the order\=/type of this set, while $\N$ emphasises the arithmetic structure. $\Z$ stands for the set of all integers $\{\ldots,-1,0,1,\ldots\}$.

A (finite) \emph{path} is a (finite) sequence of edges with matching endpoints, meaning of the form $v_0 \re{c_0} v_1,v_1 \re{c_1} v_2, \dots$, which we conveniently write as $v_0 \re{c_0} v_1 \re{c_1} \dots$.
We say that $\pi$ is a \emph{path from $v_0$ in $G$}, and that vertices $v_0,v_1,v_2,\dots$ appearing on the path are \emph{reachable} from $v_0$.
We use $G[v_0]$ to denote the restriction of $G$ to vertices reachable from $v_0$.
The \emph{label} of a path $\pi$ is the sequence $c_0c_1 \dots$ of labels of its edges; it belongs to $C^\omega$ if $\pi$ is infinite and to $C^\ast$ otherwise.
We sometimes write $v \rp w$ to say that $w$ labels an infinite path from $v$, or $v \rp w v'$ to say that $w$ labels a finite path from $v$ to $v'$.
We write $\lang(G,v_0) \subseteq C^\omega$ for the set of labels of all infinite paths from $v_0$ in $G$, and $\lang(G) \subseteq C^\omega$ for the set of labels of all infinite paths in $G$, that is the union of $\lang(G,v_0)$ over all $v_0\in V(G)$.

A \emph{graph morphism} from $G$ to $G'$ is a map $\phi\colon V(G) \to V(G')$ such that for every edge $v \re c v' \in E(G)$, it holds that $\phi(v) \re c \phi(v') \in E(G')$.
We write $G \re{\phi} G'$.
We sometimes say that $G$ \emph{embeds} in $G'$ or that $G'$ \emph{embeds} $G$, and we write $G \to G'$, to say that there exists a~morphism from $G$ to $G'$.
Note that $G \to G'$ implies $\lang(G) \subseteq \lang(G')$.

A graph $G$ is $v_0$\=/\emph{rooted} if it has a distinguished vertex $v_0\in V(G)$ called the \emph{root}.
A \emph{tree}~$T$ is a $t_0$\=/rooted graph such that all vertices in $T$ admit a unique finite path from the root $t_0$.

\paragraph*{Games.}
A \emph{$C$\=/arena} is given by a $C$\=/graph $A$ together with a~partition of its vertices $V(A) = \VE \sqcup \VA$ into those controlled by Eve $\VE$ and those controlled by Adam $\VA$.
A \emph{strategy} (for Eve) $(S,\pi)$ in an arena $A$ is a graph $S$ together with a surjective morphism $\pi\colon S \to A$ satisfying that for every vertex $v \in \VA$, every outgoing edge $v \re c v' \in E(A)$, and every $s \in \pi^{-1}(v)$, there is an outgoing edge $s \re c s' \in E(S)$ with $\pi(s') = v'$. Recall that under our assumptions every vertex needs to have at least one outgoing edge, thus for every $v \in \VE$ and every $s \in \pi^{-1}(v)$ there must be at least one outgoing edge from $s$ in $S$.

The example arenas in this work are drawn following a~standard notation, where circles (resp.~squares) denote vertices controlled by Eve (resp.~Adam). Vertices with a single outgoing edge are denoted by a simple dot, it does not matter who controls them.

A strategy is \emph{positional} if $\pi$ is injective. In this case, we can assume that $V(S)=V(A)$ and $E(S)\subseteq E(A)$, with $\pi$ being identity.

An \emph{objective} is a set $W \subseteq C^\omega$ of infinite sequences of elements of $C$.
%We say that two objectives $W_1 \subseteq C_1^\omega$ and $W_2 \subseteq C_2^\omega$ are isomorphic if there is a bijection between $C_1$ and $C_2$ whose extension into a bijection between $C_1^\omega$ and $C_2^\omega$ maps $W_1$ to $W_2$.
In this paper, we will always work with \emph{prefix\=/independent} objectives, meaning objectives which satisfy $cW=W$ for all $c \in C$ (that is $cw \in W \Leftrightarrow w \in W$ for any $w\in C^\omega$); this allows us to simplify many of the definitions.

%Objectives which are only \emph{prefix-increasing} meaning that $cW \supseteq W$ for all $c \in C$, will also make a brief apparition.
A~graph~$G$ \emph{satisfies} an~objective~$W$ if $\lang(G) \subseteq W$.
A \emph{game} is given by a $C$\=/arena $A$ together with an~objective $W$.
It is \emph{winning} (for Eve) if there is a strategy $(S,\pi)$ such that $S$ satisfies~$W$.
In this case, we also say that Eve \emph{wins} the game $(A,W)$ with the strategy $(S,\pi$).
We say that an~objective~$W$ is \emph{positional} (over finite arenas or over arbitrary arenas) if for any (finite or arbitrary) arena~$A$, whenever Eve wins the game $(A,W)$, she wins $(A,W)$ with a~positional strategy.

\paragraph*{Neutral letters.}
A letter $\eps \in C$ is said to be \emph{weakly neutral} for an objective $W \subseteq C^\omega$ if for any word $w\in C^\omega$ decomposed into $w=w_0w_1\dots$ with non\=/empty words $w_i \in C^+$,
\[
    w \in W \iff \eps w_0 \eps w_1 \eps \dots \in W.
\]
A weakly neutral letter $\eps\in C$ is \emph{strongly neutral} if in the above, the $w_i$ can be chosen empty, and moreover, $\eps^\omega \in W$.
Thus, for prefix\=/independent objectives, the difference between the two notions relies on the membership of $\eps^\omega$ in $W$.

A~few examples: for the parity objective, the priority $0$ is strongly neutral; for $\Bounded$, the weight $0$ is strongly neutral; for $\MP_{\leq 0}$, the letter $0$ is only weakly neutral (because $1^\omega \notin \MP_{\leq 0}$ however $010010001\dots \in \MP_{\leq 0}$), and likewise for $\MP_{<0}$ because $0^\omega \notin \MP_{<0}$.

\paragraph*{Monotone and universal graphs.}
An \emph{ordered graph} is a graph $G$ equipped with a~total order $\geq$ on its set of vertices $V(G)$.
We say that it is \emph{monotone} if
\[
    v \geq u \re c u' \geq v' \tin G \qquad \text{implies} \qquad v \re c v' \in E(G).
\]
Such a graph is \emph{well founded} if the order ${\geq}$ on $V(G)$ is well founded.

We will use a variant of universality called (uniform) \emph{almost-universality} (for trees), which is convenient when working with prefix-independent objectives.
A $C$-graph $U$ is \emph{almost $W$\=/universal}, if $U$ satisfies $W$, and for any tree $T$ satisfying $W$, there is a vertex $t \in V(T)$ such that $T[t] \to U$.
We will rely on the following inductive result from~\cite{Ohlmann23}.

\begin{theorem}[Follows from Theorem~3.2 and Lemma~4.5 in~\cite{Ohlmann23}]\label{thm:structure_gives_positionality}
Let $W \subseteq C^\omega$ be a~prefix\=/independent objective such that there is a graph which is almost $W$\=/universal.
Then $W$ is positional over arbitrary arenas.
\end{theorem}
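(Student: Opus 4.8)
The plan is to realise $W'$ as a countable union of well-behaved objectives and to invoke Corollary~\ref{cor:closure-union}. First I would use positionality of $W$ over finite arenas to extract, for every $n \in \N$, a finite monotone graph $U_n$ which is $(W,n)$\=/universal: it satisfies $W$ (so $\lang(U_n) \subseteq W$) and every $C$\=/graph of size at most $n$ satisfying $W$ admits a morphism into it; this is the finite\=/arena counterpart of the structuration results of~\cite{Ohlmann23} recalled in Section~\ref{sec:preliminaries}. Completing each $U_n$ with a strongly neutral letter (through its monotone $\eps$\=/closure) turns it into a finite history\=/deterministic monotone co\=/Büchi automaton; let $W_n$ be the objective it recognises. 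By the ``if'' direction of Theorem~\ref{thm:charac-sigma2}, each $W_n$ is prefix\=/independent, lies in $\bsigma2$, is positional over arbitrary arenas, and admits a strongly neutral letter. I then set $W' = \bigcup_{n} W_n$, which is prefix\=/independent and, by Corollary~\ref{cor:closure-union}, positional over arbitrary arenas.

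It remains to prove $W' \equiv W$, which I would split into two embeddings. For the first, assume Eve wins $(A,W)$ on a finite arena $A$ of size $n$; by finite positionality her winning strategy may be taken positional, giving a subgraph $S \subseteq A$ with $|S| \le n$ and $\lang(S) \subseteq W$. Universality of $U_n$ yields a morphism $S \to U_n$, whence $\lang(S) \subseteq \lang(U_n) \subseteq W_n \subseteq W'$, so the same $S$ witnesses a win in $(A,W')$. For the converse it suffices to show $W_n \cap C^\omega \subseteq W$ for every $n$: a word $w \in C^\omega \cap W_n$ has, by construction of $W_n$, a suffix labelling an infinite path of $U_n$; skipping the neutral\=/letter edges via monotonicity, the non\=/neutral part of this path is a genuine path of $U_n$, hence its label lies in $\lang(U_n) \subseteq W$, so weak neutrality of $\eps$ together with prefix\=/independence of $W$ place $w$ in $W$. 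Consequently any positional strategy winning $(A,W')$ on a finite arena produces labels in $W' \cap C^\omega \subseteq W$ and therefore wins $(A,W)$.

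The main obstacle is the mismatch between the \emph{weakly} neutral letter assumed for $W$ and the \emph{strongly} neutral letters demanded by Corollary~\ref{cor:closure-union}: a strongly neutral letter $\delta$ of $W_n$ forces $\delta^\omega \in W_n \subseteq W'$, and if $\delta \in C$ with $\delta^\omega \notin W$ (as happens for $\MP_{<0}$, where $0^\omega \notin \MP_{<0}$) then Adam could force $\delta^\omega$ and win $(A,W)$ while losing $(A,W')$, breaking finite equivalence. Resolving this is where weak neutrality of $\eps$ must be used carefully: when $\eps^\omega \in W$ one may take $\eps$ itself as the strong neutral letter of each $W_n$, and the trailing\=/$\eps^\omega$ paths of $U_n$ cause no harm; when $\eps^\omega \notin W$ one instead equips the automata $U_n$ with a fresh strongly neutral letter $\sharp \notin C$, so that all discrepancies between $W_n$ and $W$ are confined to words containing $\sharp$, which never occur on the original $C$\=/arenas, keeping $\lang(U_n)\cap C^\omega \subseteq W$ and hence preserving $\equiv$. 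Checking that this enrichment indeed yields history\=/deterministic monotone co\=/Büchi automata recognising objectives sandwiched as $\lang(U_n) \subseteq W_n$ with $W_n \cap C^\omega \subseteq W$, and that the weak\=/neutrality bookkeeping goes through for the trailing\=/neutral tails, is the technical heart of the argument.
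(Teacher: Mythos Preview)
Your proposal does not address the stated theorem at all. Theorem~\ref{thm:structure_gives_positionality} assumes the existence of an almost $W$\=/universal graph and concludes positionality over arbitrary arenas; your write\=/up never mentions almost\=/universality, introduces an auxiliary objective $W'$ absent from the statement, assumes positionality of $W$ over \emph{finite} arenas and a weakly neutral letter (neither is a hypothesis here), and invokes Corollary~\ref{cor:closure-union} and Theorem~\ref{thm:charac-sigma2}, both of which are proved \emph{using} Theorem~\ref{thm:structure_gives_positionality}. What you have sketched is a proof strategy for Theorem~\ref{thm:main} (the completeness result $W \equiv W'$ with $W'$ positional over arbitrary arenas), not for the present statement.

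For the record, the paper does not give its own proof of Theorem~\ref{thm:structure_gives_positionality}: it is quoted as a black box from~\cite{Ohlmann23} (Theorem~3.2 and Lemma~4.5 there). The argument behind it is of an entirely different nature --- one shows directly that an almost $W$\=/universal graph can be used to extract a positional winning strategy from an arbitrary winning strategy, via an unravelling into a tree and then collapsing along the morphism into the universal graph. Nothing about finite arenas, $\bsigma{2}$, or neutral letters enters that proof, so even as an alternative route your outline is not applicable here.
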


\paragraph*{Structuration results.} The following results were proved in Ohlmann's PhD thesis (Theorems~3.1 and~3.2 in~\cite{OhlmannThesis}); the two incomparable variants stem from two different techniques.

\begin{lemma}[Finite structuration]\label{lem:finite-structuration}
Let $W$ be a prefix-independent objective which is positional over finite arenas and admits a weakly neutral letter, and let $G$ be a finite graph satisfying~$W$.
Then there is a monotone graph $G'$ satisfying $W$ such that $G \to G'$.
\end{lemma}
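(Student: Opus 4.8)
The plan is to recover $G'$ from a \emph{positional} winning strategy of Eve in a finite arena $\mathcal A_G$ built from $G$, using the hypothesis that $W$ is positional over finite arenas. Since $G$ is finite it involves only finitely many letters, so $\mathcal A_G$ can be kept finite, possibly after adjoining the weakly neutral letter $\eps$; the role of $\eps$ is to label certain bookkeeping moves of Eve so that they do not affect membership in $W$.

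\textbf{Building the arena.}
I would let the positions of $\mathcal A_G$ record (i) an under-approximating \emph{profile}, that is a nonempty set $P \subseteq V(G)$ with the intended meaning that the real token of $G$ currently sits at one of its vertices, and (ii) a linear order on $P$, meant as a local snapshot of the order of the monotone graph we are building. In a round, Adam picks a letter $c$ available from $P$ (meaning some $v \in P$ has an outgoing $c$-edge; if he picks an unavailable letter the play halts and Eve wins), after which, following a \emph{bounded} $\eps$-labelled phase in which Adam may challenge Eve to commit her order on various pairs of relevant vertices, Eve moves to a nonempty profile $P' \subseteq \{v' \mid \exists v\in P,\ v\re c v'\}$ whose order is inherited along the chosen edges; the emitted letter is $c$. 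The design must guarantee two properties. First, Eve wins $(\mathcal A_G, W)$: ignoring order and always taking $P'$ to be the full set of $c$-successors, any infinite play has a label in $\lang(G)\subseteq W$ by a K\"onig argument — every vertex of $P_{i+1}$ has a $c_i$-predecessor in $P_i$, so the finitely branching tree of backward-consistent chains $(v_0,\dots,v_n)$, $v_j\in P_j$, has an infinite branch, which is a genuine infinite path of $G$. Second — and this is where positionality must do real work — Adam's challenge phase must be arranged so that the only way for Eve to survive it is to maintain a \emph{globally coherent} linear order, rather than the order-blind powerset strategy above.

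\textbf{Extraction and verification.}
By positionality over finite arenas, fix a positional winning strategy $\sigma$. Positionality forces $\sigma$ to react to each (profile, available letter) pair in one fixed way, so the ordering decisions made by $\sigma$ are automatically consistent: they determine a single total order and a single transition $P \re c P'$ per pair. I would then take $V(G')$ to be the profiles reachable under $\sigma$ from singleton profiles, ordered by that order, with edges the $\sigma$-transitions closed under the monotonicity rule $v \ge u \re c u' \ge v' \Rightarrow v \re c v'$, and let the morphism $G \to G'$ send $v$ to the singleton profile $(v)$. It remains to check $\lang(G')\subseteq W$: every infinite path of $G'$ is, up to the added dominated edges — whose targets are sub-profiles, which only tightens the K\"onig argument — a play consistent with $\sigma$, hence winning.

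\textbf{Main obstacle.}
The crux is engineering the challenge phase of $\mathcal A_G$ so that surviving it with a \emph{positional} strategy is tantamount to committing to a total order for which the resulting transition graph is monotone: the constraints must be loose enough that $\lang(G)\subseteq W$ still lets Eve win, yet tight enough that order-incoherent strategies are punished. Secondary points needing care are to force the $\eps$-phases to terminate in boundedly many steps — so that $\eps^\omega$ is never the label of a play, which is exactly why a \emph{weakly} neutral letter suffices and $\eps^\omega\in W$ is not needed — and to confirm that closing $\sigma$'s transitions under monotonicity introduces no label outside $W$. The remaining steps, namely the K\"onig argument and the morphism, are routine.
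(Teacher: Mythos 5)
There is a genuine gap: the central device of your construction is never actually built. You reduce the lemma to ``engineering the challenge phase of $\mathcal A_G$ so that surviving it with a positional strategy is tantamount to committing to a total order for which the resulting transition graph is monotone,'' and you yourself flag this as the main obstacle. But that engineering problem \emph{is} the content of the lemma; without a concrete protocol for the challenge phase, together with a proof that (a) Eve still wins $(\mathcal A_G,W)$ when the challenges are present and (b) any positional winning strategy must answer them coherently, nothing has been proved. Moreover, even granting some challenge phase, the extraction step is not well-defined as written: each profile $P\subseteq V(G)$ carries a linear order on \emph{its own elements}, but monotonicity of $G'$ requires a total order on $V(G')$, i.e.\ on the set of profiles themselves, which you never define; local orders on overlapping profiles need not cohere into anything global, and the assertion that positionality ``automatically'' yields a single total order is exactly the consistency claim that would need proof. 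The final verification also leans on the unjustified claim that closing under the monotonicity rule only adds edges whose targets are \emph{sub-profiles}: the rule $v\ge u\re c u'\ge v'$ adds edges towards anything below $u'$ in the (undefined) order, not towards subsets of $u'$.

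For contrast, the paper's proof avoids powersets and challenge games entirely. It saturates $G$ itself with $\eps$-edges until no more can be added without violating $W$, and reads the order directly off the $\eps$-edges: $v>v'$ iff $v\re\eps v'$ is present. Transitivity is immediate from saturation; totality is where positionality enters, via a single, very small auxiliary arena in which Adam simulates $G^\eps$ and Eve's only choice is between two $\eps$-moves towards a pair of allegedly incomparable vertices $v_0,v_1$ --- a positional winning strategy picks one of them uniformly, which lets one add the missing $\eps$-edge and contradict saturation. The monotone graph is then obtained by composing edges with $\eps^*$-paths and quotienting by mutual comparability, and the morphism $G\to G'$ is essentially the identity followed by this quotient. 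If you want to pursue your route, you would need to (i) specify the challenge protocol precisely, (ii) define the global order on profiles, and (iii) prove the coherence claim; each of these is a substantial missing step rather than a routine check.
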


\begin{lemma}[Infinite structuration]\label{lem:infinite-structuration}
Let $W$ be a prefix-independent objective which is positional over arbitrary arenas and admits a strongly neutral letter, and let $G$ be any graph satisfying~$W$.
Then there is a well-founded monotone graph $G'$ satisfying $W$ such that $G \to G'$.
\end{lemma}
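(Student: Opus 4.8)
The plan is to exploit positionality of $W$ \emph{over arbitrary arenas} by designing a single ``structuration'' game whose positional winning strategy encodes a well\=/founded monotone order on the vertices of $G$. The strongly neutral letter $\eps$ plays two roles. First, I normalise $G$ by adding an $\eps$\=/self\=/loop at every vertex: this preserves $\lang(G)\subseteq W$, since $\eps^\omega\in W$ and $W$ is invariant under inserting occurrences of $\eps$, and it only makes $G$ easier to embed, so we may assume every vertex of $G$ is so equipped. Second, it lets Eve ``idle'' in the game below, which will be what forces well\=/foundedness. Since $G'$ is only required to embed $G$ and to satisfy $W$, it suffices to produce one well\=/founded monotone graph $U$ with $\lang(U)\subseteq W$ together with a morphism $G\to U$; I will build $U$ as a graph carried by ordinals and extract both its edge relation and the morphism from a positional strategy.

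Next I set up the structuration arena $\mathbb{A}$. Its Adam\=/vertices are the vertices of $G$ (and it costs nothing to throw in one copy of every countable graph satisfying $W$, which is precisely what forces the extracted order to be \emph{uniform} across contexts). From a vertex $v$, Adam selects an outgoing edge $v\re c v'$ of $G$, committing to a label $c$, and Eve must reply by maintaining a token on an ordinal \emph{rank} $\alpha<\lambda$, for a sufficiently large ordinal $\lambda$, emitting the same label $c$ while moving her rank subject to the monotonicity constraint (she may only descend, or stay, relative to her domination of Adam's position). The objective of $\mathbb{A}$ is $W$ read on the emitted labels. Eve wins $\mathbb{A}$ by the naive history\=/dependent strategy that sets her rank to track Adam's current vertex under a fixed well\=/ordering of the vertices, so that the emitted word is exactly the label of a path of $G$, hence in $W$. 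Because $W$ is positional over arbitrary arenas, Eve wins $\mathbb{A}$ with a \emph{positional} strategy $\sigma$, whose reply now depends only on the current rank and the label read, and not on which vertex or which graph Adam is exploring.

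From $\sigma$ I read off $U$: its vertices are the reachable ranks $\alpha<\lambda$ ordered as ordinals, and its edges are the monotone closure of the moves $\alpha\re c \beta$ that $\sigma$ performs, so that by construction $U$ satisfies the monotonicity implication $v\geq u\re c u'\geq v'$ implies $v\re c v'$. Well\=/foundedness of $U$ is immediate, its vertices being ordinals. The morphism $G\to U$ is obtained by sending each $v\in V(G)$ to the rank $\sigma$ holds when Adam sits at $v$; the fact that $\sigma$ emits matching labels turns this into a genuine graph morphism into $U$. It then remains to verify soundness, $\lang(U)\subseteq W$: every infinite path of $U$ should be realised as the rank\=/component of a play consistent with $\sigma$, which is winning, so its label lies in $W$.

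The main obstacle is exactly this last verification, together with the choice of $\lambda$, and it is where positionality and the neutral letter do the real work rather than mere satisfaction of $W$ by $G$. On one hand, one must guarantee that $\lambda$ can be fixed so that the positional $\sigma$ keeps Eve's ranks inside a well\=/founded set while still matching every label Adam can play; a non\=/positional win is trivial, but the \emph{positional} win is what produces a context\=/independent, hence globally consistent, rank assignment, and the strongly neutral letter ensures that Adam idling on $\eps$ cannot compel Eve to descend forever, which is what pins down well\=/foundedness. On the other hand, passing to the \emph{monotone closure} introduces shortcut edges, and hence infinite paths, absent from the moves actually played by $\sigma$; checking that these newly created paths still carry labels in $W$ (using prefix\=/independence and the winning property of $\sigma$) is the most delicate point, and it is here that the argument genuinely depends on $W$ being positional over arbitrary arenas rather than merely on $\lang(G)\subseteq W$.
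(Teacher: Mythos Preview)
The paper does not itself prove this lemma; it defers to \cite[Theorem~3]{Ohlmann23}. The technique there, and in the finite analogue proved in Appendix~\ref{app:finite-structuration}, is rather different from what you sketch: one saturates $G$ with $\eps$\=/edges and shows that the resulting $\eps$\=/reachability relation is a (well\=/founded) strict total order, by constructing, for each potential failure of totality or of well\=/foundedness, a specific arena in which a positional winning strategy for Eve forces the desired comparison; the strongly neutral hypothesis $\eps^\omega\in W$ is what rules out infinite $\eps$\=/descending chains. Your plan instead tries to read ordinal ranks off a positional strategy in a single global arena. That is an appealing idea, but as written it is not a proof and has substantive gaps.

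The central gap is a circularity in your well\=/foundedness argument. You want $U$ to be well\=/founded ``for free'' because its vertices are ordinals; that only helps if Eve's rank is forbidden to increase along a play. But then your ``naive'' strategy of tracking Adam's current vertex under a fixed well\=/ordering of $V(G)$ is illegal: Adam can walk so that his vertex goes \emph{up} in that well\=/ordering, and Eve would have to increase her rank. Conversely, if Eve may increase her rank, nothing in your construction forces the extracted order to be well\=/founded, and you are back to square one. Relatedly, the morphism $G\to U$ you describe is not well\=/defined: a positional strategy for Eve depends on the current \emph{arena} vertex, which in your setup carries Eve's current rank as well as Adam's position, so ``the rank $\sigma$ holds when Adam sits at $v$'' is not a function of $v$ alone. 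Finally, you yourself flag the verification $\lang(U)\subseteq W$ after monotone closure as ``the most delicate point'' and do not carry it out; that step is the substance of the lemma, not a residual detail.
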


Note that in both results, we may assume that $|G'| \leq |G|$, simply by restricting to the image of $G$.
Details of the proof of Lemma~\ref{lem:infinite-structuration} can be found in~\cite[Theorem~3]{Ohlmann23}; Lemma~\ref{lem:finite-structuration} appears only in Ohlmann's PhD thesis~\cite{OhlmannThesis}, we give details in Appendix~\ref{app:finite-structuration} for completeness.

\paragraph*{Automata.} A \emph{co\=/Büchi automaton over $C$} is a $q_0$\=/rooted $C\times\{\normal,\cobuchi\}$\=/graph $A$.
In this context, vertices $V(A)$ are called \emph{states}, edges $E(A)$ are called \emph{transitions}, and the root $q_0$ is called the \emph{initial state}.
Moreover, transitions of the form $q \re{(c,\normal)} q'$ are called \emph{normal transitions} and simply denoted $q \re c q'$, while transitions of the form $q \re{(c,\cobuchi)} q'$ are called \emph{co\=/Büchi} transitions and denoted $q \reb c q'$.
For simplicity, we assume automata to be \emph{complete} (for any state $q$ and any letter $c$, there is at least one outgoing transition labelled $c$ from $q$) and \emph{reachable} (for any state $q$ there is some path from $q_0$ to $q$ in $A$).

A path $q_0 \re{(c_0,a_0)} q_1 \re{(c_1,a_1)} \dots$ in $A$ is \emph{accepting} if it contains only finitely many co\=/B\"uchi transitions, meaning that only finitely many of $a_i$ equal $\cobuchi$.
If $q\in V(A)$ is a state, then define the \emph{language} $\lang(A,q) \subseteq C^\omega$ of a co-Büchi automaton \emph{from a~state} $q\in V(A)$ as the set of infinite words which label accepting paths from $q$ in $A$.
The \emph{language} of $A$ denoted $\lang(A)$ is $\lang(A,q_0)$.
Note that in this paper, automata are not assumed to be finite.

We say that an automaton is \emph{monotone} if it is monotone as a $C\times\{\normal,\cobuchi\}$\=/graph.
Likewise, morphisms between automata are just morphisms of the corresponding $C\times\{\normal,\cobuchi\}$\=/graphs that moreover preserve the initial state. Note that $A \to A'$ implies $\lang(A) \subseteq \lang(A')$.
A co\=/Büchi automaton is \emph{deterministic} if for each state $q\in V(A)$ and each letter $c \in C$ there is exactly one transition labelled by $c$ outgoing from $q$.

A \emph{resolver} for an automaton $A$ is a deterministic automaton $R$ with a morphism $R \to A$. Note that the existence of this morphism implies that $\lang(R)\subseteq \lang(A)$.
Such a resolver is \emph{sound} if additionally $\lang(R)\supseteq \lang(A)$ (and thus $\lang(R) = \lang(A)$).
A co\=/B\"uchi automaton is \emph{history\=/deterministic} if there exists a sound resolver $R$.
Our definition of history-determinism is slightly non-standard, but it fits well with our overall use of morphisms and of possibly infinite automata.
This point of view was also adopted by Colcombet (see~\cite[Definition~13]{Colcombet12}).
For more details on history determinism of co\=/Büchi automata, we refer to~\cite{HP06,KS15,BKS17,ARK22}.

We often make use of the following simple lemma, which follows directly from the definitions and the fact that composing morphisms results in a morphism.

\begin{lemma}\label{lem:morphisms_preserve_hd}
Let $A$, $A'$ be automata such that $A \to A'$, $A$ is history-deterministic, and $\lang(A) = \lang(A')$.
Then $A'$ is history-deterministic.
\end{lemma}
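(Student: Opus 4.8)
The plan is to take for $W'$ the language of a single well-founded monotone graph that amalgamates all finite monotone graphs satisfying $W$. Write $\mathcal{G}$ for the collection of finite monotone graphs $G$ with $\lang(G) \subseteq W$; since finite graphs use finitely many letters and $C$ is countable, $\mathcal{G}$ is countable (and if it is empty the statement is vacuous, as no finite arena is then won). Put $W'_0 = \bigcup_{G \in \mathcal{G}} \lang(G) \subseteq W$. First I would show that $W'_0$ and $W$ are won over exactly the same finite arenas: the inclusion $W'_0 \subseteq W$ gives one direction, and conversely, if Eve wins a finite arena $A$ for $W$ then, by positionality over finite arenas, she wins with a finite positional strategy $S$ satisfying $\lang(S) \subseteq W$; Lemma~\ref{lem:finite-structuration} — this is the only place the weakly neutral letter enters — yields $S \to G$ for some $G \in \mathcal{G}$, so $\lang(S) \subseteq \lang(G) \subseteq W'_0$ and $A$ is won for $W'_0$. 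Since the set of finite arenas won by Eve only grows when the objective grows, any objective sandwiched as $W'_0 \subseteq W' \subseteq W$ is won over the same finite arenas as $W$.

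Next I would construct the amalgamating graph $U$. Fix an enumeration of $\mathcal{G}$ in which every member recurs at arbitrarily large indices, and stack the corresponding graphs into an ordinal sum $G_0 < G_1 < \cdots$ — each placed entirely below the next — taking the monotone closure and calling the result $U$. Then $U$ is monotone by construction, and its vertex order, a countable ordinal sum of finite orders, is well founded. Every forced edge weakly decreases the level index, so each $G_i$ reappears as an induced subgraph of $U$, giving $W'_0 \subseteq \lang(U)$; moreover, along any infinite path the level index is a non-increasing sequence of natural numbers, hence eventually constant, so a suffix of the path stays inside a single $G_i$ and its label lies in $\lang(G_i) \subseteq W$, whence $\lang(U) \subseteq W$ by prefix-independence. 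Thus $W'_0 \subseteq \lang(U) \subseteq W$, and I set $W' := \lang(U)$.

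The recurring enumeration is what makes $W' = \lang(U)$ prefix-independent. Closure under deleting a letter is immediate by advancing one step along a path. For closure under prepending a letter $c$, I would use that every $c \in C$ labels an edge of some member of $\mathcal{G}$: given any $G \in \mathcal{G}$, adding a new top vertex with a $c$-edge into $G$ and taking the monotone closure keeps all labels in $W$ (the new vertex dominates, creates no $c$-labelled self-loop, and only produces words of the form $cw$ with $w \in W$, which lie in $W$ by prefix-independence). As this graph recurs cofinally among the levels of $U$, monotonicity forces a $c$-edge from arbitrarily high levels down onto the starting vertex of any prescribed path, so $c$ can always be prepended within $U$. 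Hence $W'$ is prefix-independent and, being sandwiched between $W'_0$ and $W$, satisfies $W' \equiv W$.

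It remains to prove that $W'$ is positional over arbitrary arenas, which I would obtain from Theorem~\ref{thm:structure_gives_positionality} by showing that $U$ is almost $W'$-universal. Since $W' = \lang(U)$, the graph $U$ satisfies $W'$, so the task is: for every tree $T$ with $\lang(T) \subseteq W' = \lang(U)$, exhibit a vertex $t$ with $T[t] \to U$. The finite case is the easy seed — any finite graph satisfying $W$ structures into some $G_i$ and hence maps into $U$ — and the main obstacle is the passage to infinite trees. Here I would attach to each vertex $t$ of $T$ a vertex of $U$ capable of hosting the entire subtree $T[t]$, defined as a fixpoint of the natural operator that asks each child $t \re c t'$ to be matched by an edge $u \re c u'$ of $U$; exploiting the well-foundedness of the order on $U$ together with the hypothesis $\lang(T) \subseteq \lang(U)$, I would argue that this assignment is defined on some subtree, which is exactly a morphism $T[t] \to U$. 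This fixpoint-and-ranking step, in the spirit of Ohlmann's structuration, is the delicate heart of the argument; once it is established, Theorem~\ref{thm:structure_gives_positionality} yields that $W'$ is positional over arbitrary arenas, completing the proof.
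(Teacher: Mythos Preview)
Your proposal does not address the statement at all. Lemma~\ref{lem:morphisms_preserve_hd} is a two-line fact about automata: if $A$ is history-deterministic with sound resolver $R \to A$, and $A \to A'$ with $\lang(A)=\lang(A')$, then composing gives a morphism $R \to A'$; since $\lang(R)=\lang(A)=\lang(A')$, this $R$ is a sound resolver for $A'$, and $A'$ is history-deterministic. That is the entire content, and it is exactly what the paper says (``follows directly from the definitions and the fact that composing morphisms results in a morphism'').

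What you wrote is instead an attempted proof of Theorem~\ref{thm:main}, the completeness result $W \equiv W'$ with $W'$ positional over arbitrary arenas. None of the objects you introduce --- the class $\mathcal G$, the amalgamated graph $U$, the recurring enumeration, the fixpoint ranking on trees --- have anything to do with the lemma in question, which mentions no objectives, no arenas, and no graphs other than two automata related by a morphism. Please reread the statement you were asked to prove and supply the short composition-of-morphisms argument above.
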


An automaton $A$ is \emph{saturated} if it has all possible co\=/Büchi transitions: $V(A) \times (C \times \{\cobuchi\}) \times V(A) \subseteq E(A)$.
The \emph{saturation} of an automaton $A$ is obtained from $A$ by adding all possible co-Büchi transitions. Similar techniques of saturating co\=/B\"uchi automata have been previously used to study their structure~\cite{KS15,IostiKuperberg2019,ARK22}.

Note that languages of saturated automata are always prefix\=/independent.
The lemma below states that co\=/Büchi transitions are somewhat irrelevant in history\=/deterministic automata recognising prefix\=/independent languages.

\begin{lemma}\label{lem:saturation}
Let $A$ be a history-deterministic automaton recognising a prefix-independent language and let $A'$ be its saturation.
Then $\lang(A)=\lang(A')$ and $A'$ is history-deterministic.
Moreover, $\lang(A') = \lang(A',q)$ for any $q \in V(A')$.
\end{lemma}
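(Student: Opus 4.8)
The plan is to prove the three assertions essentially separately, using throughout that the identity on states is a morphism $A \to A'$ (since saturation only adds edges and keeps the initial state), that $\lang(A)$ is prefix-independent, and the standing convention that $A$ is reachable.

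First I would establish $\lang(A) = \lang(A')$. The inclusion $\lang(A) \subseteq \lang(A')$ is immediate from the morphism $A \to A'$. For the converse, I would take $w = w_0 w_1 \cdots \in \lang(A')$ together with a witnessing accepting run; since this run has only finitely many co\=/Büchi transitions and since saturation does not alter the normal transitions of the automaton, there is an index $n$ and a state $p$ such that the tail of the run from $p$, reading $w_{\geq n} := w_n w_{n+1} \cdots$, uses only normal transitions, hence is an accepting run \emph{of $A$} from $p$. By reachability of $A$ I would pick a finite path $q_0 \rp u p$ in $A$ and prepend it; the resulting run of $A$ from $q_0$ is still accepting (all its co\=/Büchi transitions lie in the finite prefix), so $u \cdot w_{\geq n} \in \lang(A)$, and prefix-independence gives first $w_{\geq n} \in \lang(A)$ and then $w \in \lang(A)$.

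Granting this, $A'$ is history-deterministic by a direct application of Lemma~\ref{lem:morphisms_preserve_hd} to the morphism $A \to A'$, using that $A$ is history-deterministic by hypothesis and that $\lang(A) = \lang(A')$ by the previous step. For the ``moreover'' part --- that $\lang(A') = \lang(A',q)$ for every state $q$ --- I would use saturation directly: given any accepting run of $A'$ from a state $q$ reading $w = w_0 w_1 \cdots$, with $r$ the state it reaches after the first transition, and given any other state $q'$, replacing the first transition by the saturated co\=/Büchi transition $q' \reb{w_0} r$ produces a run of $A'$ from $q'$ on the same word $w$ with exactly one extra co\=/Büchi transition, hence still accepting. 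This shows $\lang(A',q) \subseteq \lang(A',q')$ for all pairs of states, so by symmetry all the languages $\lang(A',q)$ coincide; taking $q' = q_0$ gives the claim.

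The lemma is largely bookkeeping with the definitions; the only point that requires a little care is the inclusion $\lang(A') \subseteq \lang(A)$, where one must observe that an accepting run of $A'$, after deleting its finite co\=/Büchi-tainted prefix, genuinely is a run of $A$ (because every transition added by saturation is a co\=/Büchi transition), and then reconnect this tail to the initial state via reachability of $A$ together with prefix-independence of $\lang(A)$.
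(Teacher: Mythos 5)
Your proof is correct and follows essentially the same route as the paper: identity as a morphism $A \to A'$ gives one inclusion, the converse inclusion comes from stripping the finitely many co\=/Büchi transitions off an accepting run of $A'$ and invoking prefix-independence, history-determinism follows from Lemma~\ref{lem:morphisms_preserve_hd}, and the ``moreover'' part is immediate from saturation. You are in fact slightly more careful than the paper at one point, explicitly reconnecting the normal-transition tail to $q_0$ via reachability before applying prefix-independence, a step the paper leaves implicit.
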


\begin{proof}
    Clearly $A \to A'$ thus $\lang(A) \subseteq \lang(A')$; it suffices to prove $\lang(A') \subseteq \lang(A)$ and conclude by Lemma~\ref{lem:morphisms_preserve_hd}.
    Let $w_0w_1\dots \in \lang(A')$ and let $q_0 \re{(w_0,a_0)} q_1 \re{(w_1,a_1)} \dots$ be an accepting path for $w$ in $A'$.
    Then for some $i$, $q_i \re{(w_i,a_i)} q_{i+1} \re{(w_{i+1},a_{i+1})} \dots$ is comprised only of normal transitions.
    Thus, this suffix of the path does not use edges added during the saturation process, which means this suffix is an~accepting path in $A$.
    We conclude that $w_iw_{i+1} \dots \in \lang(A, q_i)$. Due to the assumption that our automata are reachable, there is a finite word $u$ such that $q_0 \rp{u} q_i$ which implies that $uw\in\lang(A')$ but because of prefix independence it means that $w\in\lang(A')$.

    The claim that $\lang(A',q)$ is independent of $q$ follows directly from the fact that $A'$ is saturated: if $w_0w_1\dots\in\lang(A',q)$ as witnessed by a path starting with $q\re{(w_0,a_0)} q'$ and $q''$ is some other state then we can also accept $w$ from $q'$ by first using a co\=/Büchi transition $q''\re{(w_0,\cobuchi)}q'$ to reach $q'$ and then continue as the original path does.
\end{proof}

\section{Positional prefix-independent \texorpdfstring{$\bsigma{2}$}{Sigma02} objectives}
\label{sec:sigma2}

\subsection{A characterisation}

Recall that $\bsigma{2}$ objectives are countable unions of closed objectives; for the purpose of this paper it is convenient to observe that these are exactly those objectives recognised by (countable) deterministic co\=/Büchi automata (see for instance~\cite{mskrzypczak_colorings}).

The goal of the section is to prove Theorem~\ref{thm:charac-sigma2} which we now restate for convenience.

\charac*

Before moving on to the proof, we proceed with a quick technical statement that allows us to put automata in a slightly more convenient form.

\begin{lemma}\label{lem:little-massage}
Let $A$ be a history-deterministic automaton recognising a non\=/empty prefix\=/independent language.
There exists a history-deterministic automaton $A'$ with $\lang(A')=\lang(A)$ and such that from every state $q' \in V(A')$, there is an infinite path comprised only of normal transitions.
Moreover, if $A$ is countable, well founded, and monotone, then so is $A'$.
\end{lemma}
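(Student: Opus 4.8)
The plan is to identify the states from which \emph{no} accepting path exists, observe that they are useless, and redirect them. Call a state $q\in V(A)$ \emph{live} if $\lang(A,q)\neq\emptyset$, i.e.\ some accepting path starts at $q$; call it \emph{dead} otherwise. Since $\lang(A)\neq\emptyset$ and the language is prefix-independent, the initial state $q_0$ is live, and moreover along any accepting path \emph{every} visited state is live (a suffix of an accepting path is accepting). The first observation is that a live state need not have an infinite \emph{normal} path leaving it: it may only admit accepting paths that begin with finitely many co-B\"uchi transitions. To fix this I would take $A'$ to have the same state set, the same initial state, but restrict attention to live states and add, for each live state $q$ and each letter $c\in C$, a normal transition $q\re{c}q'$ whenever there is already \emph{any} accepting path from $q$ reading a word starting with $c$; concretely, one can let $q'$ be a live state such that $q\re{(c,a)}q'$ is a transition of (the completion of) $A$ for some $a$ and $q'$ is still live. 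Dead states and all transitions into them are simply removed, and completeness is repaired using prefix-independence: from a live state every letter can be continued into a live state.

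More carefully, the cleanest route is: (1) Let $A_0$ be the restriction of $A$ to its live states, keeping all transitions among them. This is still complete: if $q$ is live and $c\in C$, then since $\lang(A,q)$ is nonempty and prefix-independent, $c\cdot\lang(A,q)\subseteq\lang(A,q)$, so some transition $q\re{(c,a)}q'$ can be used as the first step of an accepting path, whence $q'$ is live. So $A_0$ is a complete, reachable (after trimming) automaton with $\lang(A_0)=\lang(A)$. (2) Now for each live state $q$, by definition there is an accepting path from $q$; its tail is an infinite normal path, so from \emph{some} state reachable from $q$ there is an infinite normal path. To get one from $q$ \emph{itself}, add for every live $q$ and every $c\in C$ a normal transition $q\re{c}q'$ whenever $A_0$ has some transition $q\re{(c,a)}q'$ lying on an accepting path (equivalently, with $q'$ live — which by (1) is automatic). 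Call the result $A'$. Every added transition is normal and its target is live, so iterating we obtain from every state of $A'$ an infinite normal path, as required. It remains to check $\lang(A')=\lang(A)$: the inclusion $\supseteq$ holds because $A_0\to A'$ is the identity-on-states morphism (we only added transitions), and $\subseteq$ because any accepting path in $A'$ either uses infinitely many original transitions of $A_0$ (then, after a point, it is an accepting path of $A_0$, and we conclude by prefix-independence) or is eventually one of the added normal paths, which by construction witness membership in $\lang(A_0)=\lang(A)$ again by prefix-independence.

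For history-determinism: by Lemma~\ref{lem:morphisms_preserve_hd} it suffices to exhibit a morphism $A\to A'$ together with $\lang(A)=\lang(A')$. But $A'$ is built from $A$ by deleting some states/transitions and adding some transitions, so the identity map on the surviving states is \emph{not} directly a morphism $A\to A'$. Instead I would argue as follows: the saturation step and Lemma~\ref{lem:saturation} are not available here because we must preserve well-foundedness and monotonicity, so history-determinism is re-established by hand. A sound resolver $R$ for $A$ can be transformed into a sound resolver for $A'$: $R$ is deterministic with a morphism $R\to A$ and $\lang(R)=\lang(A)$; compose with the (partial) transformation to $A'$. Concretely, only states of $R$ mapping to dead states of $A$ are ever reached while reading words outside $\lang(A)$, and on input words in $\lang(A)$ the run of $R$ stays among states mapping to live states; after re-routing those finitely-many initial co-B\"uchi steps through the new normal transitions of $A'$ one obtains a deterministic automaton $R'$ with $R'\to A'$ and $\lang(R')=\lang(A)=\lang(A')$. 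I would phrase this by noting that $\lang(A,q)=\lang(A',q)$ for every surviving $q$ (shown exactly as in the language computation above), so history-determinism transfers through the resolver essentially verbatim. Finally, preservation of the structural properties: restricting to live states preserves well-foundedness and monotonicity trivially; the added normal transitions $q\re{c}q'$ are added precisely when a transition $q\re{(c,a)}q'$ already existed, so monotonicity of $A'$ follows from monotonicity of $A$ (any triple $v\ge q\re{c}q'\ge v'$ that we would need to close was already closed in $A$, possibly via a co-B\"uchi transition, hence also in $A'$ after saturating \emph{only those} normal edges — one should check monotonicity is stated for the $C\times\{\normal,\cobuchi\}$-graph, so adding a normal edge where a co-B\"uchi one existed does not break it, and where a normal edge existed it is immediate). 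Countability is obvious.

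The main obstacle is the interaction between adding the new normal transitions and \emph{monotonicity}: one must add exactly enough normal edges to guarantee an infinite normal path from every state, but not so many (nor the wrong ones) that the monotonicity closure condition fails. The safe choice is to add a normal edge $q\re{c}q'$ only when $q\re{(c,a)}q'$ was already an edge of $A$; then the monotone closure requirement for this new edge was already a requirement in $A$ (on the $\{\normal,\cobuchi\}$-labelled graph) and is thus satisfied, while the existence of infinite normal paths is secured because every live state has an outgoing $A$-transition that continues to a live state along which, eventually, the path is all-normal — and turning that eventual normal tail "all the way back" is exactly what the added edges do. Verifying this bookkeeping precisely is the only delicate point; everything else is a direct unfolding of the definitions.
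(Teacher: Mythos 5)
There is a genuine gap, and it is fatal to the construction: in step (2) you add a normal transition $q \re{c} q'$ whenever $A_0$ has \emph{any} transition $q \re{(c,a)} q'$ with $q'$ live, i.e.\ you promote co\-/Büchi transitions between live states to normal ones. This enlarges the language. Take the single\-/state automaton for the co\-/Büchi objective over $\{a,b\}$ (finitely many $b$'s): one state $q_0$, a normal transition $q_0 \re{a} q_0$ and a co\-/Büchi transition $q_0 \reb{b} q_0$. The state $q_0$ is live, so your rule adds a normal transition $q_0 \re{b} q_0$, and then $b^\omega \in \lang(A')$ although $b^\omega \notin \lang(A)$. Your argument for $\lang(A') \subseteq \lang(A)$ breaks exactly here: a path that eventually uses only the added normal transitions does \emph{not} ``witness membership in $\lang(A_0)$'' — each added edge individually lies on some accepting path, but their infinite concatenation projects to a run of $A_0$ with infinitely many co\-/Büchi transitions, which is rejecting. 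The underlying misstep is choosing the wrong set of states: the lemma asks for states admitting an infinite \emph{all\-/normal} path, and this set cannot in general be enlarged to the set of live states by adding normal edges without changing the language.

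The paper's proof takes the opposite route: let $V$ be the set of states that \emph{already} have an infinite path of normal transitions ($V \neq \varnothing$ because $\lang(A) \neq \varnothing$), turn every normal transition adjacent to $V(A)\setminus V$ into a co\-/Büchi one (harmless, since any path through such a state must anyway see a co\-/Büchi transition), then \emph{saturate} with all co\-/Büchi transitions and restrict to $V$. Note that your reason for avoiding saturation is mistaken: saturation does preserve well\-/foundedness (the order on vertices is unchanged) and monotonicity (the monotonicity condition for co\-/Büchi\-/labelled edges becomes vacuous once all of them are present, and the condition for normal\-/labelled edges is untouched). Saturation is also what makes history\-/determinism transfer cheaply: since all transitions touching $V(A)\setminus V$ are co\-/Büchi and $A'$ has every co\-/Büchi transition, the map sending $V(A)\setminus V$ to the new initial state and fixing $V$ is a morphism $A \to A'$, and Lemma~\ref{lem:morphisms_preserve_hd} concludes — no hand\-/built resolver is needed. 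Your observation in (1) that restricting to live states preserves completeness via prefix\-/independence is fine, but it solves a different (and easier) problem than the one the lemma poses.
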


\begin{proof}
Let $V \subseteq V(A)$ be the set of states $q \in V(A)$ from which there is an infinite path of normal transitions.
Note that $V \neq \emptyset$ since $\lang(A)$ is non-empty. Due to prefix\=/independence of $A$ we can assume that $q_0$ belongs to $V$ (if not, modify the initial state using Lemma~\ref{lem:saturation}).

Since every path from $V(A) \setminus V$ visits at least one co-Büchi transition, we turn all normal transitions adjacent to states in $V(A) \setminus V$ into co-Büchi ones; moreover we saturate $A$. These operations do not affect $\lang(A)$ or history-determinism. Thus, without loss of generality assume that $A$ is already the modified automaton, with the initial state in $V$.

Call $A'$ the restriction of $A$ to states in $V$. It is clear that restricting $A$ to some subset of states preserves being countable, well founded, and monotone.

We claim that $\lang(A) = \lang(A')$.
The inclusion $\lang(A') \subseteq \lang(A)$ is obvious as $A'$ is a restriction of $A$ to a~subset of states. We focus on the converse: let $w=w_0w_1 \dots \in \lang(A)$ and take an accepting path~$\pi$ for $w$.
Then there is a suffix of $\pi$ which remains in $V$ and therefore defines a path in $A'$; we conclude thanks to prefix-independence of $\lang(A')$ (which is a~consequence of it being saturated).

It remains to see that $A'$ is history-deterministic.
For this, we observe that any transition adjacent to states in $V(A) \setminus V$ is a co-Büchi transition; therefore the map $\phi:V(A) \to V(A') = V$ which is identity on $V$ and sends $V(A) \setminus V$ to the initial state of $A'$ defines a~morphism $A \to A'$.
We conclude by Lemma~\ref{lem:morphisms_preserve_hd}.
\end{proof}

To prove Theorem~\ref{thm:charac-sigma2}, we separate both directions so as to provide more precise hypotheses.

\begin{lemma}\label{lem:positionality_to_structure}
Let $W$ be a prefix\=/independent $\bsigma{2}$ objective admitting a~strongly neutral letter. If $W$ is positional over arbitrary arenas then $W$ is recognised by a~countable history\=/deterministic monotone well\=/founded co\=/Büchi automaton.
\end{lemma}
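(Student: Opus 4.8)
\textbf{Proof proposal for Lemma~\ref{lem:positionality_to_structure}.}

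The plan is to start from the hypothesis that $W \in \bsigma{2}$ and combine two things: the fact (recalled just before the statement) that $\bsigma{2}$ objectives are exactly the languages of countable deterministic co-Büchi automata, and the Infinite Structuration Lemma~\ref{lem:infinite-structuration}, which applies because $W$ is prefix-independent, positional over arbitrary arenas, and admits a strongly neutral letter. So I would first fix a countable deterministic co-Büchi automaton $D$ with $\lang(D) = W$. The key idea is to consider the "unfolding'' or "acceptance'' graph associated to $D$: a $C$-graph whose infinite paths correspond exactly to the accepting runs of $D$, i.e. a graph $G$ that \emph{satisfies} $W$ and whose language is (essentially) all of $W$. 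Concretely, take $G$ to have vertices the states of $D$ together with a counter, or more cleanly, build a $C$-graph recording "runs of $D$ that will only see finitely many co-Büchi transitions from now on'': since $D$ is deterministic this is well-defined, and $\lang(G) \subseteq W$ while every word of $W$ labels a path in $G$ from some vertex.

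Then I would apply Lemma~\ref{lem:infinite-structuration} to this graph $G$ (restricting first to the reachable part, and using that $\lang(G) \subseteq W$): this yields a countable well-founded monotone $C$-graph $G'$ with $\lang(G') \subseteq W$ and a morphism $G \to G'$, hence $W = \lang(G) \subseteq \lang(G')$, so in fact $\lang(G') = W$ (after taking the union over roots, or noting prefix-independence handles the choice of root). Now the remaining task is to convert the monotone well-founded $C$-graph $G'$ into a \emph{monotone well-founded co-Büchi automaton} recognising $W$, and to argue it is history-deterministic. The natural move is: take $G'$ as the underlying graph of the automaton, turn every edge into a normal transition, pick a root/initial state appropriately, and add co-Büchi transitions as needed for completeness — for each state $q$ and letter $c$ with no outgoing $c$-edge in $G'$, add a co-Büchi transition $q \reb c q_0$ (or to any fixed state), which does not change the language by prefix-independence and the structure of accepting runs. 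Monotonicity of the co-Büchi transitions can be arranged since they can all be made to go to a single minimal state, or by saturating (Lemma~\ref{lem:saturation}) — saturation preserves monotonicity and well-foundedness and keeps the language $W$ since $W$ is prefix-independent.

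For history-determinism: the automaton built directly from $G'$ is \emph{not} deterministic, but I would exhibit a sound resolver. The resolver should be a deterministic automaton $R$ with a morphism $R \to A$ and $\lang(R) = W$. A natural candidate is to go back to the deterministic automaton $D$: the composition $D \to G \to G' = A$ (suitably set up so that $D$'s runs embed into $G$) gives a morphism from a deterministic automaton onto $A$, and since $\lang(D) = W = \lang(A)$, this $D$ is a sound resolver, so $A$ is history-deterministic. One must be slightly careful that $D$'s transitions map to transitions of the right type (normal vs.\ co-Büchi) — this is where turning the co-Büchi structure of $D$ into co-Büchi transitions of $A$ matters, and the cleanest fix is to work with saturated automata throughout and invoke Lemma~\ref{lem:saturation} and Lemma~\ref{lem:morphisms_preserve_hd} at the end.

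\textbf{Main obstacle.} The delicate point is getting the interface between the "graph world'' of Lemma~\ref{lem:infinite-structuration} and the "automaton world'' right: namely, defining $G$ so that (i) it genuinely satisfies $W$, (ii) it captures all of $W$, and (iii) there is an honest morphism from a deterministic co-Büchi automaton into the resulting $G'$ that respects the normal/co-Büchi partition, so that history-determinism follows from Lemma~\ref{lem:morphisms_preserve_hd}. The well-foundedness of $G'$ must be reconciled with the fact that co-Büchi automata can have infinite accepting runs — this is exactly why the accepting runs must be encoded as \emph{normal} transitions of the final automaton while the "rank-decreasing'' or "reset'' behaviour goes into co-Büchi transitions. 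I expect that most of the real content is in setting up $G$ correctly and in verifying that the deterministic automaton $D$ really does furnish a sound resolver of the structured automaton; the monotonicity and well-foundedness bookkeeping is then routine given Lemmas~\ref{lem:infinite-structuration}, \ref{lem:saturation}, and \ref{lem:morphisms_preserve_hd}.
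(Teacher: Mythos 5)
Your proposal follows essentially the same route as the paper's proof: take a countable deterministic co\-/Büchi automaton recognising $W$, pass to the graph of its normal transitions, apply the infinite structuration lemma (using the positionality hypothesis), turn the resulting well\-/founded monotone graph into a saturated automaton, and obtain history\-/determinism from the morphism out of the original deterministic automaton via Lemma~\ref{lem:morphisms_preserve_hd}. The only imprecisions are minor: the paper uses Lemma~\ref{lem:little-massage} to ensure the normal\-/transition graph is sinkless before structuration, and it does not claim $\lang(G)=W$ — it suffices that every word of $W$ has a suffix labelling a path in $G$, with saturation and prefix\-/independence recovering the rest.
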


\begin{proof}
Note that $W$ cannot be empty because it contains $\epsilon^\omega$.
Let $A$ be a history-deterministic co\=/Büchi automaton recognising $W$ with initial state $q_0$; thanks to Lemma~\ref{lem:little-massage} we assume that every state in $A$ participates in an~infinite path of normal transitions. Clearly $A$ is countable.
Let $G$ be the $C$\=/graph obtained from $A$ by removing all the co-Büchi transitions.
The fact that $G$ is sinkless (and therefore, $G$ is indeed a graph) follows from the assumption on $A$.
Since $W$ is prefix\=/independent, it holds that $G$ satisfies $W$.

Apply the infinite structuration result (Lemma~\ref{lem:infinite-structuration}, which requires the strongly neutral letter) to $G$ to obtain a well-founded monotone graph $G'$ satisfying $W$ and such that $G \re \phi G'$.
Note that we may restrict $V(G')$ to the image of~$\phi$. Due to the fact that $C$ is countable, this guarantees that $G'$ is countable.

Now let $A'$ be the co\=/B\"uchi automaton obtained from $G'$ by turning every edge into a~normal transition, setting the initial state to be $q'_0=\phi(q_0)$, and saturating.
Note that $A'$ is countable monotone and well-founded. Note moreover that the morphism $G \re \phi G'$ is in fact also a~morphism witnessing that $A\to A'$. Thus, since $A$ is history deterministic, Lemma~\ref{lem:morphisms_preserve_hd} implies that so is $A'$. It remains to see that $A'$ recognises $W$,.

Let $w \in \lang(A')$.
Then $w=uw'$ where $w' \in \lang(G') \subseteq W$.
It follows from prefix\=/independence that $w \in W$.
Conversely, let $w_0w_1\dots \in W$ as witnessed by an accepting path $\pi= q_0 \re{(w_0,a_0)} q_1 \re{(w_1,a_1)} \dots$ from $q_0$ in $A$. This path has only finitely many co-Büchi transitions.

Then consider the path $\pi'= \phi(q_0) \re{w_0} \phi(q_1) \re{w_1} \dots$ in $A'$, where we use co-Büchi transitions only when necessary, meaning when there is no normal transition $\phi(q_i) \re{w_i} \phi(q_{i+1})$ in $A'$.
Since $\pi$ visits only finitely many co-Büchi transitions, it is eventually a path in $G$, and thus since $\phi$ is a~morphism, $\pi'$ is eventually a path in $G'$, and hence it sees only finitely many co-Büchi transitions in $A'$. 
Hence $\lang(A')=W$.
\end{proof}

For the converse direction, we do not require a neutral letter.

\begin{lemma}\label{lem:automata_to_positionality}
If $W$ is a~prefix\=/independent objective recognised by a countable history\=/deterministic monotone well\=/founded co\=/B\"uchi automaton then
$W$ is positional over arbitrary arenas.
\end{lemma}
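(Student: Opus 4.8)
The plan is to invoke Theorem~\ref{thm:structure_gives_positionality}: it suffices to construct, from the given countable history\=/deterministic monotone well\=/founded co\=/Büchi automaton $A$ recognising $W$, a single graph which is almost $W$\=/universal. By Lemma~\ref{lem:saturation} we may assume $A$ is saturated, so $\lang(A,q)=W$ for every state $q$, and by Lemma~\ref{lem:little-massage} we may further assume that from every state there is an infinite path of normal transitions; these reductions preserve countability, monotonicity, and well\=/foundedness. The natural candidate for the almost\=/universal graph is $G_A$, the $C$\=/graph obtained from $A$ by \emph{deleting all co\=/Büchi transitions}, keeping the total order on states inherited from $A$. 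Since co\=/Büchi transitions only go ``downward or sideways'' in a monotone ordered graph, removing them preserves monotonicity, and $G_A$ is sinkless by the second reduction, hence a legitimate well\=/founded monotone graph. Because every infinite path in $G_A$ uses only normal transitions it is accepting in $A$, so $\lang(G_A)\subseteq\lang(A)=W$, i.e.\ $G_A$ satisfies $W$.

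The substance of the proof is almost\=/universality: given any tree $T$ satisfying $W$, I must find a vertex $t\in V(T)$ with a morphism $T[t]\to G_A$. The idea is to run the resolver $R$ of $A$ along $T$. Fix a sound resolver $R\to A$, which is a deterministic automaton with $\lang(R)=W$. Starting from the initial state of $R$ and reading labels along paths of $T$ from its root $t_0$, determinism of $R$ assigns to each vertex $v\in V(T)$ a unique state $r(v)\in V(R)$, and composing with $R\to A$ a state $q(v)\in V(A)$; this defines a map $V(T)\to V(A)$ respecting edges \emph{as transitions of $A$} (normal or co\=/Büchi). The problem is that this map may use co\=/Büchi transitions, so it is not yet a morphism into $G_A$. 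However, since $T$ satisfies $W=\lang(R)$, every infinite path of $T$ labels a word in $\lang(R)$, and because $R$ is sound the run of $R$ on that path is accepting in $A$, hence uses only finitely many co\=/Büchi transitions. By König's lemma (or a direct argument using that $T$ is a tree) there is therefore a vertex $t\in V(T)$ such that no edge of $T[t]$ is mapped to a co\=/Büchi transition of $A$: indeed, the set of vertices below which a co\=/Büchi transition still occurs cannot contain an infinite path, so some vertex escapes it. Then the restriction of $q$ to $T[t]$ sends every edge of $T[t]$ to a normal transition of $A$, i.e.\ to an edge of $G_A$, giving the desired morphism $T[t]\to G_A$.

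The main obstacle I anticipate is the König\=/lemma step: one has to argue cleanly that ``finitely many co\=/Büchi transitions on every branch'' yields ``a whole subtree with none,'' and this needs the resolver to be genuinely deterministic (so the run along $T$ is canonical) together with the tree structure of $T$ (so that the bad set is downward\=/closed in the ancestor order and has no infinite branch). A subtlety is that the resolver $R$ need not be finite, but determinism is all that is used, so this is harmless. Once this combinatorial core is in place, everything else is bookkeeping: checking that $G_A$ is sinkless, monotone and well\=/founded, that the morphism $R\to A$ composes correctly to track states along $T$, and that reachability in $T[t]$ matches the image. Assembling these, $G_A$ is almost $W$\=/universal, and Theorem~\ref{thm:structure_gives_positionality} yields positionality of $W$ over arbitrary arenas.
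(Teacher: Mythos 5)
Your overall strategy is exactly the paper's: strip the co-Büchi transitions to get a candidate graph, run a sound resolver along a tree $T$ satisfying $W$ to label its vertices with states of $A$, locate a vertex $t$ whose subtree uses only normal transitions, and conclude via Theorem~\ref{thm:structure_gives_positionality}. All the bookkeeping (Lemma~\ref{lem:little-massage} to make the stripped graph sinkless, composing the deterministic run of $R$ with the morphism $R \to A$, soundness giving finitely many co-Büchi transitions on every branch) is right and matches the paper.

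However, the justification you give for the crucial combinatorial step is flawed. Let $B$ be the set of vertices $t$ such that some path from $t$ in $T$ crosses an edge mapped to a co-Büchi transition. Your claim that ``$B$ cannot contain an infinite path'' is false: take a tree with an infinite spine of normal-image edges where each spine vertex also has a side branch beginning with a single co-Büchi-image edge; then every spine vertex lies in $B$, yet every branch sees at most one co-Büchi transition, so soundness is not violated (and the desired $t$ still exists, on a side branch). König's lemma is also not available here, since trees in this setting may be infinitely branching. The correct statement — and what the paper proves — is only that $B \neq V(T)$: if every vertex were in $B$, one could greedily concatenate, starting from the root, finite paths each crossing at least one co-Büchi-image edge, producing a \emph{single} infinite branch whose image in $A$ has infinitely many co-Büchi transitions, contradicting soundness of the resolver. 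Since $T$ is sinkless, any vertex outside $B$ gives the required $t$ with $T[t] \to G_A$. With that repair (and a one-line disposal of the case $W = \varnothing$, which Lemma~\ref{lem:little-massage} excludes), your proof goes through and coincides with the paper's.
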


\begin{proof}
As previously, if $W$ is empty then it is trivially positional, so we assume that $W$ is non-empty, and we take an automaton $A$ satisfying the hypotheses above and apply Lemma~\ref{lem:little-massage} so that every state participates in an infinite path of normal transitions.
Let $U$ be the $C$-graph obtained from $A$ by removing all co-Büchi transitions and turning normal transitions into edges; thanks to Lemma~\ref{lem:little-massage}, $U$ is sinkless so it is indeed a graph.
We prove that $U$ is almost $W$-universal for trees.
Let $T$ be a tree satisfying $W$ and let $t_0$ be its root.

Since $A$ is history-deterministic, there is a mapping $\phi:V(T) \to V(A)$ such that for each edge $t \re c t' \in E(T)$, there is a transition $\phi(t) \re{(c,a)} \phi(t')$ in $A$ with some $a\in\{\normal,\cobuchi\}$, and such that for all infinite paths $t_0 \re{w_0} t_1 \re{w_1} \dots$ in $T$, there are only finitely many co-Büchi transitions on the path $\phi(t_0) \re{(w_0,a_0)} \phi(t_1) \re{(w_1,a_1)} \dots$ in $A$.

\begin{claim}
\label{cl:root}
There is a vertex $t'_0 \in V(T)$ such that for all infinite paths $t'_0 \re{w_0} t'_1 \re{w_1} \dots$ from $t'_0$ in $T$, there is no co-Büchi transition on the path $\phi(t'_0) \re{w_0} \phi(t'_1) \re{w_1} \dots$ in $A$.
\end{claim}

\begin{proof}[Proof of Claim~\ref{cl:root}]
Assume towards contradiction that no such vertex exists.
Then starting from the root $t_0$, we build an infinite path $t_0 \rp{w_0} t_1 \rp {w_1} \dots$ in $T$ such that $\phi(t_0) \rp{w_0} \phi(t_1) \rp{w_1} \dots$ has infinitely many co-Büchi transitions in $A$.
Indeed, assuming the path built up to $t_i$, we simply pick $t_i \rp{w_i} t_{i+1}$ such that there is a co-Büchi transition in $A$ on the corresponding path $\phi(t_i) \rp{w_i} \phi(t_{i+1})$. Thus, we constructed a path contradicting the observation below: this path has infinitely many co\=/B\"uchi transitions in $A$.
\end{proof}
There remains to observe that $\phi$ maps $T[t'_0]$ to $U$, and thus $U$ is almost $W$-universal for trees.
We conclude by applying Lemma~\ref{thm:structure_gives_positionality}.
\end{proof}

The above two lemmata constitute a proof of Theorem~\ref{thm:charac-sigma2}.

\subsection{A few examples}

We now elaborate on some examples.

\paragraph*{Kopczy\'nski-monotonic objectives.}
In our terminology, Kopczy\'nski's monotonic objectives correspond to the prefix-independent languages that are recognised by finite monotone co-Büchi automata.
Note that such automata are of course well-founded, but also they are history-deterministic (even determinisable by pruning): one should always follow a transition to a maximal state.
Therefore our result proves that such objectives are positional over arbitrary arenas.
A very easy example is the co-Büchi objective
\[
    \coBuchi=\{w \in \{\normal,\cobuchi\}^\omega \mid w \text{ has finitely many occurrences of } \cobuchi\},
\]
which is recognised by a (monotone) automaton with a single state.
Some more advanced examples are given in Figure~\ref{fig:k-monotone}.

\begin{figure}[ht]
\begin{center}
\includegraphics[width=0.9 \linewidth]{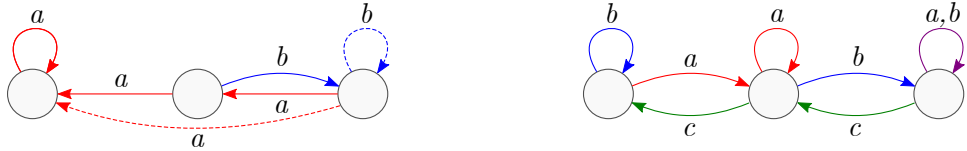}
\end{center}
\caption{Two finite monotone co-Büchi automata recognising prefix-independent languages.
For clarity, the co-Büchi transitions are not depicted but connect every pair of states; likewise, edges following from monotonicity (such as the dashed ones for example), are omitted.
The automaton on the left recognises words with finitely many $aab$ infixes.
The automaton on the right recognises words with finitely many infixes in $c(a^*cb^*)^+c$.}
\label{fig:k-monotone}
\end{figure}

%\ms{reviewer [2]: ,,- figure 1: “the co-Büchi transitions are not depicted but connect every pair of states” -  with what labels? Also, neglecting the edges that follow monotonicity does not help me understand, but on the contrary. The example should contain all details, particularly about monotonicity, which is not an intuitive property of automata. An initial state is missing in both automata.''}

\paragraph{Finite support.} The finite support objective is defined over $\N$ by
\[
    \Finite=\{w \in \N^\omega \mid \text{finitely many distinct letters appear in } w\}
\]

%\ms{reviewer [2]: ,,In addition, the transition v – w → v’  is confusing: w is usually used for a word, not a letter. ''. I would gladly change all the letter to $n$ instead of $w$ here, but it's not clear where to stop, as we write a word in the language as $w_0w_1\cdots$\ldots}

Consider the automaton $A$ over $V(A)=\N$ with 
\[
    v \re{w} v' \in E(A) \iff w, v' \leq v,  
\]
co-Büchi transitions everywhere, and initial state $0$ (see Figure~\ref{fig:finite}).
\begin{figure}[ht]
\begin{center}
\includegraphics[width=0.4\linewidth]{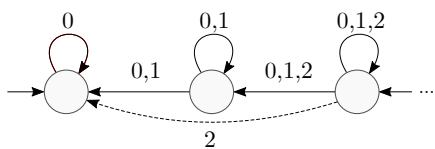}
\end{center}
\caption{An automaton $A$ for objective $\Finite$. Co-Büchi edges, as well as some edges following from monotonicity (such as the dashed one) are omitted for clarity.}\label{fig:finite}
\end{figure}

It is countable, history\=/deterministic, well\=/founded, and monotone and recognises $\lang(A) = \Finite$.
Details of the proof are easy and left to the reader.
Positionality of $\Finite$ can also be established by Corollary~\ref{cor:closure-union}, as it is a countable union of the safety languages $F^\omega \subseteq \N^\omega$, where $F$ ranges over finite subsets of $\N$.
As far as we are aware, this result is novel.\footnote{A similar positionality result is proved in~\cite{Gimbert2004}, but it assumes finite degree of the arena, vertex-labels (which is more restrictive), and injectivity of the colouring of the arena.}

\paragraph{Energy objectives.}
Recall the energy objective
\[
    \Bounded=\Big\{w_0 w_1 \dots \in \Z^\omega \mid \sup_{k}\sum_{i=0}^{k-1} w_i \text{ is finite}\Big\},
\]
which is prefix-independent and belongs to $\bsigma{2}$.
Consider the automaton $A$ whose set of states is $\N$, with the initial state $0$ and with all possible co\=/Büchi transitions, and normal transitions of the form $v \re{w} v'$ where $w \leq v - v'$.
Note that $A$ is well-founded and monotone, so we should prove that it is history-deterministic and recognises $\Bounded$.

Note that any infinite path of normal edges $v_0 \re{w_0} v_1 \re{w_1} \dots$ in $A$ is such that for all $i$, $w_i \leq v_i - v_{i+1}$, and therefore
\[
    \sum_{i=0}^{k-1} w_i \leq v_0 - v_k \leq v_0
\]
and thus $\lang(A) \subseteq \Bounded$.

A resolver for $A$ works as follows: keep a counter $c$ (initialised to zero), and from a~vertex $v$ and when reading an~edge $w$,
\begin{itemize}
\item if $v \geq w$, then take the normal transition $v \re{w} v-w$;
\item otherwise, take the co-Büchi transition $v \reb{w} c$ and increment the counter.
\end{itemize}

Formally, $R$ is defined by $V(R)=V(A) \times \N$, initial state $r_0=(0,0)$ and
\[
    \begin{array}{rcl}
    (v,c) \re{w} (v',c') \in E(R) &\iff& v'=v-w \geq 0 \tand c'=c \\
    (v,c) \reb{w} (v',c') \in E(R) &\iff& v-w<0 \tand c'=v'=c+1.
    \end{array}
\]
Clearly $(v,c) \mapsto v$ defines a morphism from $R$ to $A$ which sends $r_0$ to $q_0$, so there remains to see that $\Bounded \subseteq L(R)$.

Consider a word $w_0w_1\dots \in \Bounded$.
By definition, there exists $N$ such that
\[\tag{$*$}\label{eq:sums_bounded}
    \sup_{k}\sum_{i=0}^{k-1} w_i\leq N.
\]

Given a finite word $u \in \Z^*$, we let $\ss(u) \in \Z$ denote the sum of its letters.
Let $\pi$ be the unique path from $r_0$ in $R$ labelled by $w$.
Note that the counter (second coordinate) in states appearing in $\pi$ always grows, and that a co-Büchi transition is read precisely when it is incremented; we show that it cannot exceed $N$ which proves that the path is accepting.
Assume towards a contradiction that the counter exceeds $N$, therefore $\pi$ is of the form
\[
    (0,0) \rp{u_0} (v_0,0) \reb{u'_0} (1,1) \rp{u_1} (v_1,1) \reb{u'_1} \dots \reb{u'_{N-1}} (N,N) \rp{u_N} (v_N,N) \reb{u'_N} (N+1,N+1) \rp {w'}
\]
where the $u_i$'s are finite words, the $u'_i$'s are letters in $\Z$, they concatenate to $u_0u'_0 \dots u_N u'_N w' = w$, and for each $i \leq N$, it holds that $\ss(u_i)=i-v_i$ and $v_i - u'_i < 0$.
Therefore we have
\[
    \ss(u_0 u'_1 \dots u_N u'_N) = \underbrace{\ss(u_0)}_{=0-v_0} + \underbrace{u'_0}_{>v_0} + \dots + \underbrace{\ss(u_N)}_{=N-v_N} + \underbrace{u'_N}_{>v_N} \geq 0+1 + \dots + N > N,
\]
contradicting~\eqref{eq:sums_bounded}.
Hence $R$ is a sound resolver for $A$, so $A$ is history-deterministic.

We conclude that $\Bounded$ is positional over arbitrary arenas.

\paragraph{Eventually non-increasing objective.}
Over the alphabet $\N$, consider the objective
\[
    \ENI = \big\{w_0w_1 \dots \in \N^\omega \mid \text{there are finitely many } i \text{ such that } w_{i+1}>w_i\big\}.
\]
Note that since $\N$ is well founded, a~sequence belongs to $\ENI$ if and only if it is eventually constant.
Consider the automaton $A$ over $\N$ with the initial state $0$, with all possible co\=/Büchi transitions, and with normal transitions $v \re w v'$ if and only if $v \geq w \geq v'$.
Note that $A$ is countable, well\=/founded, and monotone, so we should prove that it recognises $\ENI$ and is history\=/deterministic.

First, note that any infinite path of normal edges $v_0 \re{w_0} v_1 \re{w_1} \dots$ in $A$ is such that $v_0 \geq w_0 \geq v_1 \geq w_1 \geq \dots$, and therefore $\lang(A) \subseteq \ENI$.
A sound resolver for $A$ simply goes to the state $w$ when reading a~letter $w$, using a normal transition if possible, and a co-Büchi transition otherwise.
We leave the formal definition to the reader.

\paragraph{Eventually non-decreasing objective.}
In contrast, the objective
\[
    \END = \{w_0w_1 \dots \in \N^\omega \mid \text{there are finitely many } i \text{ such that } w_{i+1}<w_i\}
\]
is not positional over arbitrary arenas, as witnessed by Figure~\ref{fig:counter-example-arena}.

\begin{figure}[ht]
\begin{center}
\includegraphics[width=0.55\linewidth]{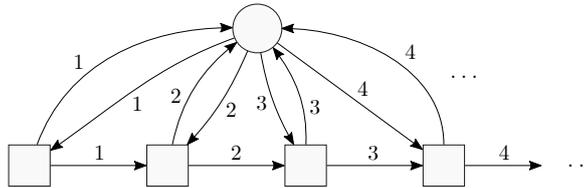}
\end{center}
\caption{An arena over which Eve requires a non-positional strategy in order to produce a~sequence which is eventually non-decreasing.}\label{fig:counter-example-arena}
\end{figure}

\subsection{Closure under countable unions}

We now move on to Corollary~\ref{cor:closure-union}, which answers Kopczy\'nski's conjecture in the affirmative in the case of $\bsigma{2}$ objectives.

\closureunion*

\begin{proof}
Let $W_0,W_1,\dots$ be a family of countably many prefix-independent $\bsigma{2}$ objectives admitting strongly neutral letters.
Using Theorem~\ref{thm:charac-sigma2} we get countable history-deterministic well-founded monotone co\=/B\"uchi automata $A_0,A_1,\dots$ for the respective objectives; without loss of generality we assume that they are saturated (Lemma~\ref{lem:saturation}).

Then consider the automaton $A$ obtained from the disjoint union of the $A_i$'s by adding all possible co-Büchi transitions, and all normal transitions from $A_i$ to $A_j$ with $i>j$.
The initial state in $A$ can be chosen arbitrarily.
Note that $A$ is well-founded, monotone, and countable, so we should prove that it recognises $W=\bigcup_i W_i$ and is history deterministic.

Note that any infinite path in $A$ which visits finitely many co-Büchi transitions eventually remains in some $A_i$, and thus by prefix-independence, $\lang(A) \subseteq W$.

It remains to prove history\=/determinism of $A$. Let $R_0,R_1,\dots$ be resolvers for $A_0,A_1,\dots$ witnessing that these automata are history deterministic. Consider a resolver which stores a sequence of states $(r_0,r_1,\ldots)$, with $r_i$ being a state of $R_i$.
Initially these are all initial states of the respective resolvers and the transitions follow the transitions of all the resolvers synchronously. 
Additionally, we store a round\=/robin counter, which indicates one of the resolvers, following the sequence
$R_0;R_0,R_1;R_0,R_1,R_2;R_0,R_1,R_2,R_3;\dots$
If we see a normal transition in the currently indicated resolver, then we also see a normal transition in $R$, and otherwise, we update the counter to the next resolver and see a co\=/Büchi transition in $R$.

Let $\ell_0 \ell_1 \dots = 0010120123 \dots$.
We formally define the resolver $R$ as follows: $V(R)=\Pi_{i=0}^\infty V(R_i) \times \N$, with transitions
\[
    \begin{array}{rcl}
        (r,j) \re{(c,\normal)} (r',j') &\iff& [\forall i. \exists a_i.\ r_i \re{(c,a_i)} r'_i \in E(R_i)]\\
&& \tand r_{\ell_j} \re{(c,\normal)} r'_{\ell_j} \in E(R_{\ell_j}) \tand j'=j \\
        (r,j) \re{(c,\cobuchi)} (r',j') &\iff & [\forall i. \exists a_i.\ r_i \re{(c,a_i)} r'_i \in E(R_i)]\\
&& \tand r_{\ell_j} \re{(c,\cobuchi)} r'_{\ell_j} \in E(R_{\ell_j}) \tand j'=j{+}1
    \end{array}
\]
and morphism $\phi:(r,j) \mapsto \phi_j(r_j) \in V(A)$, where $\phi_0:R_0 \to A_0,\phi_1:R_1 \to A_1,\dots$ are the respective morphisms.

We now prove that $W\subseteq\lang(A)$ and the above resolver is sound.
For that, consider a word $w$ which belongs to $\lang(A_n)$ for some $n$.
Assume for the sake of contradiction that the path in~$A$ constructed by the above resolver reading $w$ contains infinitely many co\=/B\"uchi transitions.
It means that infinitely many times the resolver $R_n$ reached a co\=/B\"uchi transition in $A_n$.
But this contradicts the assumption that $R_n$ is sound.
We conclude that $W$ is positional by applying Lemma~\ref{lem:automata_to_positionality}.
\end{proof}

\section{From finite to arbitrary arenas}\label{sec:finite-to-infinite}

In this section we study the difference between positionality over finite and arbitrary arenas.

\subsection{Mean-payoff games}

%\ms{reviewer [2]: ,,- Section 4.1: How do these results stand in the context of known results on mean payoff games and positional strategies for them? I would also suggest rephrasing “*proper* definition of mean-payoff” in the conclusions.''}

There are, in fact, four non\=/isomorphic variants of the mean\=/payoff objective. Three of them fail to be positional over arbitrary arenas (even over bounded degree arenas), as expressed by the following facts.

\begin{proposition}
\label{pro:non-positional-limsup}
The mean\=/payoff objective $\MP_{\leq 0}$ over $w_0w_1\dots\in \Z^\omega$ with the condition $\limsup_k \frac 1 k \sum_{i=0}^{k-1} w_i \leq 0$ is not positional over arbitrary arenas.
\end{proposition}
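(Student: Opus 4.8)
The plan is to exhibit a fixed finite arena of bounded out-degree on which Eve wins the game $(A, \MP_{\leq 0})$ but cannot win with any positional strategy. The classical witness is an arena where Eve must alternate between two "regimes" at an ever-decreasing frequency: the key phenomenon is that $\limsup$ mean-payoff is achieved by controlling the asymptotic density of a "bad" move, and any positional strategy fixes a single choice at each vertex, thereby destroying Eve's ability to dilute the bad contributions over time.

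Concretely, I would take an arena with (say) one Adam vertex $v$ controlled by Adam with two outgoing edges, one reading a large positive weight, say $+1$, looping back to $v$, and one reading weight $-1$ leading to an Eve-controlled gadget; and Eve vertices arranged so that Eve can at each visit decide to "pump down" the running average by a long stretch of $-1$'s before returning control to Adam. The point of Adam's choice is to force Eve to repeatedly "pay" a bounded positive cost, but Eve, by responding with longer and longer negative stretches, keeps $\frac{1}{k}\sum_{i<k} w_i \le 0$ in the $\limsup$. A cleaner standard construction: a single Eve vertex with a self-loop of weight $0$ and an edge of weight $+1$ into an Adam gadget that always returns weight $-1$; Eve wins by looping on $0$ for increasingly long times between forced excursions, but a positional strategy either loops forever (and never makes progress the game actually requires — here one must set up the arena so that Eve is *forced* to leave $v$ infinitely often, e.g. by making $v$ Adam-controlled with the $0$-self-loop replaced by an Adam edge) or takes the $+1$ edge cofinally with positive density, giving $\limsup \ge c > 0$. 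The first step is to nail down such an arena precisely so that: (i) Eve has a (history-dependent) winning strategy — she diagonalises, making the $k$-th negative stretch long enough that the partial averages dip back to $\le 0$ before the next forced positive move; and (ii) every positional strategy of Eve yields, against Adam's worst response, an infinite play whose weight sequence has $\limsup_k \frac1k\sum_{i<k} w_i > 0$.

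For (i), I would check that the diagonalising strategy works: after the $k$-th excursion the running sum is some bounded value; Eve then appends $N_k$ copies of $-1$ (available in her gadget) with $N_k \to \infty$ fast enough that $\frac1k\sum \le 0$ holds at the end of each block, and since between the end of block $k$ and the next forced move the average only decreases further, one shows $\limsup_k \frac1k\sum_{i<k} w_i \le 0$ along the whole play. For (ii), a positional strategy for Eve fixes the length of each negative stretch to a constant $N$ (or to zero, or she loops forever in the gadget — which I must also exclude by the arena design, e.g. by making the gadget's return to Adam forced); Adam then plays the $+1$ edge every time it is offered, and the weight sequence becomes eventually periodic with a strictly positive average, so $\limsup_k \frac1k\sum_{i<k} w_i > 0$ and $w \notin \MP_{\leq 0}$.

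The main obstacle is designing the arena so that looping forever in a single component is not an option for Eve — i.e. ruling out the degenerate positional strategy that sits in a zero-average cycle — while still ensuring Eve genuinely wins with history. This is handled by making the "productive" moves Adam-controlled (so Eve cannot refuse to leave) and all of Eve's freedom confined to choosing the length of each negative compensation stretch, which a positional strategy is forced to fix to a constant. The rest is routine arithmetic with partial sums of an eventually periodic sequence versus a diagonally-tuned one; I would present the arena as a figure (as the paper does for the analogous $\END$ counterexample in Figure~\ref{fig:counter-example-arena}) and keep the computations brief.
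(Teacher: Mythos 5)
There is a fundamental obstruction to your plan: you propose to exhibit a \emph{finite} arena on which Eve wins $(A,\MP_{\leq 0})$ but has no positional winning strategy. No such arena exists. By the classical theorem of Ehrenfeucht and Mycielski recalled in the introduction of this very paper, $\MP_{\leq 0}$ \emph{is} positional over finite arenas; the entire point of this proposition is that it is finiteness of the arena that fails. The ``main obstacle'' you identify at the end --- that in a finite gadget Eve's only way to produce arbitrarily long compensating stretches is a cycle, which she could then occupy forever positionally, whereas bounding the gadget caps the stretch length --- is not a design difficulty to be engineered around; it is the shadow of this impossibility, and no reassignment of vertices between the players will resolve it. (If you instead let Adam control the exit from the negative cycle, Eve can no longer diagonalise, so she loses even with unrestricted strategies.)

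The counterexample must be an \emph{infinite} arena, and this is what the paper uses: a one-ended chain of vertices in which Eve, starting from a base vertex, may travel arbitrarily far to the right before returning, each round trip having strictly positive total weight but a length she can make as large as she likes. A history-dependent strategy takes longer and longer loops, driving the running averages to $0$ from above, so that $\limsup_k \frac 1 k \sum_{i=0}^{k-1} w_i \leq 0$; a positional strategy must fix a single decision at each vertex of the chain and hence either walks right forever or repeats one fixed finite loop of positive average, and in both cases the averages converge to a fixed positive number. Your underlying intuition --- diluting a forced positive cost over ever longer stretches, which positionality forbids --- is exactly the right one, but it can only be realised by spreading Eve's choice of ``how long'' over infinitely many distinct vertices, not inside a finite gadget.
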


\begin{figure}[ht]
\begin{center}
\includegraphics[width=0.55\linewidth]{fig/infinite_mean_payoff_limsup.pdf}
\end{center}
\caption{The arena used in the proof of \cref{pro:non-positional-limsup}.}
\label{fig:counter-example-limsup}
\end{figure}

\begin{proof}
Consider the arena depicted on \cref{fig:counter-example-limsup}. Eve can win by following bigger and bigger loops which reach arbitrarily far to the right. This strategy brings the average of the weights closer and closer to $0$.

Nevertheless, each positional strategy of Eve either moves infinitely far to the right (resulting in $\lim_k \frac 1 k \sum_{i=0}^{k-1} w_i=1$) or repeats some finite loop which results in a fixed positive limit $\lim_k \frac 1 k \sum_{i=0}^{k-1} w_i>0$. In both cases it violates $\MP_{\leq 0}$.
\end{proof}

\begin{proposition}
\label{pro:non-positional-liminf}
Consider two $\liminf$ variants of the mean\=/payoff objective over $w_0w_1\dots\in \Z^\omega$: one where we require that $\liminf_k \frac 1 k \sum_{i=0}^{k-1} w_i \leq 0$, and the other where that same quantity is $<0$. Both these objectives are not positional over arbitrary arenas.
\end{proposition}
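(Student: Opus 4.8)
The plan is to construct a single countable arena $A$, of bounded out-degree and with all vertices controlled by Eve, on which Eve wins $\{w \in \Z^\omega \mid \liminf_k \tfrac 1 k \sum_{i<k} w_i < 0\}$ but on which no positional strategy of Eve wins even the weaker objective $\{w \in \Z^\omega \mid \liminf_k \tfrac 1 k \sum_{i<k} w_i \le 0\}$. Since the first objective is contained in the second, this settles both halves of the statement simultaneously (and explains why this single construction covers both liminf variants, in contrast to the arena of \cref{pro:non-positional-limsup}).

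The arena $A$ has a \emph{spine} $v_0, c_1, c_2, \dots$ linked by edges $v_0 \re{1} c_1 \re{1} c_2 \re{1} \cdots$ of weight $1$; writing $c_0 := v_0$, we attach at each $c_{p-1}$ (for $p \ge 1$) a \emph{petal} $P_p$, namely a simple path that leaves $c_{p-1}$, descends along $m_p := 4^p$ edges of weight $-1$ and then climbs back along $m_p + 1$ edges of weight $+1$, ending at $c_{p-1}$ again. At $v_0$ and at each $c_k$ Eve chooses between continuing along the spine and entering the next petal; all remaining vertices have a unique outgoing edge, so $A$ has out-degree at most $2$. Three facts are immediate: every petal traversal contributes $-m_p+(m_p+1)=+1$ to the running sum while momentarily dropping it by $m_p$; the only cycles of $A$ are the petal loops, each of total weight $+1$, so $A$ has no negative (indeed no non-positive) cycle; and the only infinite simple path is $v_0 c_1 c_2 \cdots$, whose running averages are all equal to $1$.

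For the first claim, Eve runs through the petals $P_1, P_2, P_3, \dots$ in order, returning to the spine after each and walking along it to the entry of the next. A direct estimate shows that at the moment she reaches the bottom of $P_j$ the partial sum equals $-m_j$ up to an additive $O(j)$, while the length read so far equals $2\sum_{i<j} m_i + m_j$ up to $O(j)$; with $m_i = 4^i$ this ratio tends to $-\tfrac{3}{5}$ as $j\to\infty$, so the produced word has $\liminf_k \tfrac 1 k \sum_{i<k} w_i \le -\tfrac 35 < 0$ and Eve wins. For the second claim, follow any fixed positional strategy of Eve from $v_0$: the play is deterministic and is either the spine $v_0 c_1 c_2 \cdots$ (constant running average $1$), or, at the first $c_{p-1}$ where the strategy chooses to enter a petal, it enters $P_p$, returns to $c_{p-1}$, and — being positional — repeats this forever, yielding an eventually periodic word whose period has total weight $+1$ and length $2m_p+1$, so the running averages converge to $1/(2m_p+1)>0$. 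In either case $\liminf_k \tfrac 1 k \sum_{i<k} w_i > 0$, so no positional strategy wins, and neither liminf variant is positional over arbitrary arenas.

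The crux is the choice of geometrically growing petal sizes $m_p = 4^p$: they must grow fast enough that the single dip caused by one petal $P_j$ is a fixed fraction of the entire length produced so far — precisely what a positional strategy cannot reproduce, since reusing one petal forever dilutes its dip to a vanishing fraction of the length — while simultaneously keeping $A$ free of negative cycles and of negative infinite simple paths, so that no positional ``shortcut'' wins. Reconciling these two requirements and verifying the liminf computation for \emph{all} Eve plays and \emph{all} positional strategies (rather than only the greedy ones described above) is the only part that requires care; the rest is bookkeeping.
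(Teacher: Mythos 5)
Your proof is correct, but it takes a genuinely different route from the paper's. The paper's argument (based on the arena of \cref{fig:counter-example-liminf}) is intrinsically two-player: a positional strategy for Eve can only realise a bounded negative dip $-2^n$ by repeating some fixed leftward loop, and \emph{Adam} then defeats it by interleaving a large rightward loop of weight $2^{n+1}$ that swamps the dip and forces a positive limit average. Your petal construction eliminates Adam entirely: all vertices belong to Eve, and the mechanism of failure is that positionality forces the play to be eventually periodic (lock into the first chosen petal, or follow the spine), and since every cycle of the arena has total weight $+1$, any such play has a strictly positive limit average; meanwhile the geometric growth $m_p=4^p$ keeps each new dip at a constant fraction ($\to 3/5$) of the elapsed time, so the non-positional ``visit ever deeper petals'' strategy wins. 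Your case analysis for positional strategies is genuinely complete (any positional $S$ contains either the spine or a single repeated petal loop from $v_0$, and prefix-independence of the liminf-average objective disposes of the surjectivity/all-suffixes issue), so the hedging in your last paragraph is unnecessary. What your version buys is a strictly stronger statement --- the liminf variants fail to be positional already on \emph{one-player}, out-degree-$2$ arenas, so the obstruction has nothing to do with the opponent --- which is worth noting given the $1$-to-$2$ player question raised in the conclusion; what the paper's version buys is a more compact arena and a proof that runs in three sentences off the figure.
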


\begin{figure}[ht]
\begin{center}
\includegraphics[width=0.65\linewidth]{fig/infinite_mean_payoff_liminf.pdf}
\end{center}
\caption{The arena used in the proof of \cref{pro:non-positional-liminf}.}
\label{fig:counter-example-liminf}
\end{figure}

\begin{proof}
Consider the arena depicted on \cref{fig:counter-example-liminf}. Again, Eve has a winning strategy for both these objectives by always going sufficiently far to the left, to ensure that the average drops below for instance $-\frac 1 2$.

Nevertheless, each positional strategy of Eve either moves infinitely far to the left (resulting again in $\lim_k \frac 1 k \sum_{i=0}^{k-1} w_i=1$), or repeats some finite loop, reaching a minimal negative weight $-2^n$ for some $n>0$. Now, Adam can win against this strategy by repeating a loop going to the right, in such a way to reach a weight $2^{n+1}$. The label of such a path satisfies $\lim_k \frac 1 k \sum_{i=0}^{k-1} w_i=\frac{2^{n+1}-1}{4n+4}>0$, violating both objectives.
\end{proof}

The remaining fourth type of a~mean\=/payoff objective is ,,$\limsup<0$'':
\[
    \MP_{<0} = \Big\{w_0 w_1 \dots \in \Z^\omega \mid \limsup_k \frac 1 k \sum_{i=0}^{k-1} w_i < 0\Big\}.
\]

\begin{proposition}
The objective $\MP_{<0}$ is positional over arbitrary arenas.
\end{proposition}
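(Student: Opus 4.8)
The strategy is to realize $\MP_{<0}$ as a countable union of prefix-independent $\bsigma{2}$ objectives, each admitting a strongly neutral letter, so that Corollary~\ref{cor:closure-union} applies directly. The key observation is that $\limsup_k \frac1k\sum_{i=0}^{k-1} w_i < 0$ holds if and only if there exist a rational threshold $-\tfrac1n < 0$ and an index $N$ such that for all $k \geq N$ the average $\frac1k\sum_{i=0}^{k-1} w_i \leq -\tfrac1n$; equivalently, after discarding a finite prefix, all partial sums are bounded above by an affine function $-\tfrac{k}{n} + \text{const}$ of the length $k$. So I would set, for each $n \geq 1$,
\[
    W_n = \Big\{w_0 w_1 \dots \in \Z^\omega \;\Big|\; \sup_k \Big(\sum_{i=0}^{k-1} w_i + \tfrac{k}{n}\Big) \text{ is finite}\Big\},
\]
which is exactly a ``tilted'' version of the energy objective $\Bounded$, obtained by shifting every weight $w$ to $w + \tfrac1n$ (formally one rescales by $n$ to stay in $\Z$, i.e. $W_n$ is isomorphic to $\Bounded$ over the alphabet $n\Z + 1 = \{nw + 1 \mid w \in \Z\} \subseteq \Z$). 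Then $\MP_{<0} = \bigcup_{n \geq 1} W_n$: if $w \in \MP_{<0}$ then $\limsup \frac1k\sum w_i \leq -\tfrac2n$ for some $n$, so $\sum_{i=0}^{k-1} w_i \leq -\tfrac{2k}{n}$ for large $k$, hence $\sum w_i + \tfrac{k}{n} \leq -\tfrac{k}{n} \to -\infty$ is bounded above, giving $w \in W_n$; conversely if $w \in W_n$ then $\sum_{i=0}^{k-1} w_i \leq -\tfrac kn + M$ for all $k$, so $\frac1k\sum w_i \leq -\tfrac1n + \tfrac Mk \to -\tfrac1n < 0$, whence $w \in \MP_{<0}$.

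Next I would check the three hypotheses of Corollary~\ref{cor:closure-union} for each $W_n$. Prefix-independence is immediate since $\sup_k$ over a sequence of partial sums is unchanged, up to a finite additive shift, by removing a prefix. Membership in $\bsigma{2}$ and positionality over arbitrary arenas both follow from the isomorphism with $\Bounded$: $\Bounded$ is in $\bsigma{2}$ and positional over arbitrary arenas by~\cite{Ohlmann23} (and this is already recalled in the excerpt, together with an explicit countable history-deterministic well-founded monotone automaton for it), and both properties transfer along an alphabet renaming. Finally, each $W_n$ admits the letter corresponding to weight $0$ in $\Z$ (which maps to $1$ in the rescaled alphabet) — wait, more carefully: $W_n$ as an objective over $\Z$ has weight $0$ as a strongly neutral letter, since inserting arbitrarily many $0$s does not change any partial sum and $0^\omega \in W_n$ because then $\sum_{i=0}^{k-1} 0 + \tfrac kn = \tfrac kn$ is \emph{not} bounded — so $0$ is \emph{not} neutral for $W_n$. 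I should instead use the alphabet $C = \Z$ and take the neutral letter to be a symbol whose associated ``tilted weight'' is $0$, i.e. the weight $w$ with $w + \tfrac1n = 0$; but that is not an integer. The clean fix is to work over the alphabet $n\Z - 1$ (tilted weights are multiples of $1$ after shifting), or simply to add a fresh neutral letter $\eps$ to the alphabet with the convention that $\eps$ contributes $0$ to every tilted partial sum — since $W_n$ is defined by a condition on tilted partial sums, $\eps^\omega \in W_n$ and $\eps$ is strongly neutral. Either way, one obtains a family of prefix-independent $\bsigma{2}$ objectives, each positional, each with a strongly neutral letter, whose union is $\MP_{<0}$ (the extra letter $\eps$ is harmless: $\MP_{<0}$ with a neutral letter $\eps$ added, where $\eps$ counts as weight $0$ with the "finitely many distinct letters" convention, is finitely/position­ally equivalent, and one can project it away since $\MP_{<0}$ already tolerates $0$s in the weak sense).

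**Main obstacle.** The substantive content is entirely in the reduction to $\Bounded$, which is already known to be positional; the delicate point is the bookkeeping around neutral letters. The objective $\MP_{<0}$ itself has weight $0$ only as a \emph{weakly} neutral letter (as the excerpt notes, $0^\omega \notin \MP_{<0}$), whereas Corollary~\ref{cor:closure-union} requires \emph{strongly} neutral letters in the pieces $W_i$. So the care needed is to choose the decomposition $W_n$ and its alphabet so that each $W_n$ genuinely has a strongly neutral letter — which, as sketched, is arranged by tilting: the letter that becomes ``weight $0$'' after the $\tfrac1n$ shift is strongly neutral for $W_n$, and $W_n$ contains $\eps^\omega$. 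The remaining routine steps — verifying $\MP_{<0} = \bigcup_n W_n$, checking prefix-independence, and transferring the $\bsigma{2}$-membership and positionality of $\Bounded$ across the affine reparametrisation of weights — are all straightforward.
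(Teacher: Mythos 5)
Your proof is correct and follows essentially the same route as the paper: decompose $\MP_{<0}$ as the countable union of the tilted boundedness objectives $\TB{n}$, reduce each to $\Bounded$ by an affine renaming of weights, and conclude via Corollary~\ref{cor:closure-union}. You are in fact slightly more careful than the paper on two points — choosing the threshold $-2/n$ rather than $-1/n$ when verifying the union, and explicitly arranging a strongly neutral letter for each $\TB{n}$ (which for $n\geq 2$ requires rescaling the alphabet or adjoining a fresh letter, a detail the paper's one-line assertion glosses over).
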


\begin{proof}
Consider the tilted boundedness objective with parameter $n \geq 1$, defined as
\[
    \TB n = \Big\{w_0w_1 \dots \in \Z^\omega \mid \sup_k \sum_{i=0}^{k-1}(w_i+\frac{1}{n}) \text{ is finite}\Big\}
\]
Note that renaming weights by $w \mapsto (1{+}n w)$ reduces $\TB n$ to $\Bounded$. More precisely, for every $w_0w_1\dots \in \Z^\omega$ we have
\[
w_0w_1\dots \in \TB n \Longleftrightarrow (1{+}nw_0)(1{+}nw_1)\dots \in \Bounded,
\]
because the bound is just multiplied by the constant $n$.

Therefore, since $\Bounded$ is positional over arbitrary arenas, it follows that $\TB n$ is also positional over arbitrary arenas: rename the labels, find a~positional strategy and then recall original labels. Note also that for every $n$ the objective $\TB n$ belongs to $\bsigma{2}$, as a~union ranging over $N\in\N$ of closed (in other words safety) objectives $\big\{w_0w_1\dots\in\Z^\omega\mid \forall_{k\in\N} \sum_{i=0}^{k-1}(w_i+\frac{1}{n})\leq N\big\}$.

\begin{claim}\label{claim:mp_is_union_of_tbs}
It holds that $\MP_{<0} = \bigcup_{n \geq 1} \TB n$.
\end{claim}

\begin{proof}[Proof of Claim~\ref{claim:mp_is_union_of_tbs}]
Write $\mp(w)=\limsup_k \frac{1}{k} \sum_{i=0}^{k-1} w_i$.
If
\[
w=w_0w_1\dots \in \TB n
\] then there is a bound $N$ such that for all $k$, $\sum_{i=0}^{k-1} (w_i + \frac{1}{n}) \leq N$, therefore $\frac{1}{k} \sum_{i=0}^{k-1} w_i \leq \frac{N}{k} - \frac{1}{n}$ and thus $\mp(w) \leq -\frac{1}{n} <0$, so $w \in \MP_{<0}$.
Conversely, if $w \in \MP_{<0}$ and $n$ is large enough so that $\frac{1}{n} \leq \mp(w)$, then $w \in \TB n$.
\end{proof}

Now, positionality of $\MP_{<0}$ follows from the claim together with Corollary~\ref{cor:closure-union}, as all $\TB n$ are prefix\=/independent, admit a~strongly neutral letter, are positional, and belong to~$\bsigma{2}$.\footnote{We thank Lorenzo Clemente for suggesting to use closure under union. A~direct proof (constructing a~universal graph) is available in the unpublished preprint~\cite{OhlmannMP}.}
\end{proof}

\subsection{A completeness result}

The aim of this section is to prove Theorem~\ref{thm:main} below, which roughly states that any objective which is positional over finite arenas is equivalent to an~objective which is positional over arbitrary arenas.
First, we need to define what we mean by ``equivalent''.

\paragraph*{Equivalence over finite arenas}

Recall that two prefix-independent objectives $W,W' \subseteq C^\omega$ are said to be \emph{finitely equivalent}, written $W \equiv W'$, if for all finite $C$-arenas $A$,
\[
    \text{ Eve wins } (A,W) \quad \iff \quad \text{ Eve wins } (A,W').
\]
Since one may view strategies as games controlled by Adam, we obtain the following motivating result.

\begin{lemma}%[Equivalence preserves positionality over finite arenas]
If $W \equiv W'$ and $W$ is positional over finite arenas then so is $W'$.
\end{lemma}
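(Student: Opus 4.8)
The plan is to establish the following chain of implications for a finite arena $A$ and the objective $W'$: Eve wins $(A, W')$ if and only if Eve wins $(A, W')$ with a positional strategy. Assume Eve wins $(A, W')$; by finite equivalence $W \equiv W'$, Eve also wins $(A, W)$. Since $W$ is positional over finite arenas, Eve wins $(A, W)$ with some positional strategy, which (recalling the convention in the preliminaries) we may take to be a subgraph $S$ of $A$ with $V(S) = V(A)$, $E(S) \subseteq E(A)$, and $\lang(S) \subseteq W$, such that from every vertex $v \in \VE$ there is a unique outgoing edge in $S$ and from every $v \in \VA$ all outgoing edges of $A$ are kept.

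The key step — which I expect to be the heart of the argument, and the phrase ``one may view strategies as games controlled by Adam'' is a hint to it — is to reinterpret this positional strategy $S$ as a new \emph{finite} arena. Concretely, view $S$ as the arena $A_S$ with the same vertices and edges as $S$ but with \emph{all} vertices now controlled by Adam, so $\VA^{A_S} = V(S)$ and $\VE^{A_S} = \varnothing$. In this one-player (Adam) game, Eve's only possible strategy is the trivial one that keeps all of $A_S$, and since $\lang(A_S) = \lang(S) \subseteq W$, Eve wins $(A_S, W)$. Now apply finite equivalence $W \equiv W'$ to the finite arena $A_S$: Eve also wins $(A_S, W')$. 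Again the only strategy available to Eve in $A_S$ is the whole graph, so this means $\lang(A_S) = \lang(S) \subseteq W'$.

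Having shown $\lang(S) \subseteq W'$, we are essentially done: the same subgraph $S$ of $A$, read back as a positional strategy for Eve in the original arena $A$ (Eve-vertices having a unique outgoing edge, Adam-vertices unconstrained), now satisfies $\lang(S) \subseteq W'$, and hence is a winning positional strategy for Eve in $(A, W')$. Since $A$ was an arbitrary finite arena on which Eve wins $(A, W')$, this shows $W'$ is positional over finite arenas.

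The main obstacle is purely bookkeeping: one must be careful that the object $S$ produced by positionality of $W$ over finite arenas is genuinely a subgraph of $A$ (legitimate by the remark in the preliminaries that a positional strategy may be taken with $V(S) = V(A)$ and $\pi = \mathrm{id}$), that it is sinkless (which it is, since Eve-vertices retain one outgoing edge and Adam-vertices retain all of theirs, and $A$ was sinkless), and that reinterpreting the controllers of its vertices does not change its set of infinite-path labels $\lang(\cdot)$ — which is immediate, since $\lang$ depends only on the underlying $C$-graph and not on the $\VE/\VA$ partition. No appeal to prefix-independence is actually needed beyond what is already baked into the definition of finite equivalence.
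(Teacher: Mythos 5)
Your proposal is correct and follows exactly the paper's argument: pass from $(A,W')$ to $(A,W)$ by equivalence, extract a positional strategy $S$ using positionality of $W$, reinterpret $S$ as a finite Adam-controlled arena, and apply the equivalence again to conclude $\lang(S)\subseteq W'$. The paper's proof is just a three-line compression of the same steps, so there is nothing to add.
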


\begin{proof}
Let $A$ be a finite $C$-arena such that Eve wins $(A,W')$.
Then Eve wins $(A,W)$, so she wins with a positional strategy $S$.
Looking at $S$ as a finite $C$-arena controlled by Adam yields that Eve wins $(S,W')$, thus $S$ satisfies $W'$.
\end{proof}

We now move on to the proof of our completeness result.

\main*

We start with the following observation, which is a standard topological argument based on K\"onig's lemma. Note that the assumption of finiteness of~$G$ is essential here.

\begin{restatable}[]{lemma}{closed}
\label{lem:finite_graph_in_sigma2}
Let $G$ be a finite $C$-graph and $v \in G$.
Then $\lang(G,v)$ is a closed subset of $C^\omega$.
\end{restatable}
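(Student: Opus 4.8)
The plan is to show that the complement of $\lang(G,v)$ is open, which amounts to showing that if an infinite word $w \in C^\omega$ is \emph{not} the label of any infinite path from $v$ in $G$, then some finite prefix of $w$ already has this property, i.e. there is no finite path from $v$ labelled by that prefix either. This is exactly where finiteness of $G$ is used, via a compactness (König lemma) argument.

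First I would set up the relevant finitary tree. Fix $w = c_0 c_1 \dots \in C^\omega$ and consider the set $P$ of finite paths from $v$ in $G$ whose label is a prefix $c_0 \dots c_{k-1}$ of $w$. Order $P$ by the prefix relation on paths; this makes $P$ into a tree rooted at the trivial path at $v$. Since $G$ is finite, every node of this tree has finitely many children (at most $|E(G)|$, as each one-step extension uses an edge of $G$), so the tree is finitely branching. Now argue by contraposition: suppose $w \notin \lang(G,v)$, i.e. there is no infinite path from $v$ labelled $w$. An infinite path from $v$ labelled $w$ is precisely an infinite branch of the tree $P$ (each of its finite prefixes is a path from $v$ labelled by the corresponding prefix of $w$). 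So $P$ has no infinite branch; by König's lemma, a finitely branching tree with no infinite branch is finite. Hence there is some $k$ such that $P$ contains no path of length $k$, meaning there is no finite path from $v$ in $G$ labelled $c_0 \dots c_{k-1}$.

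It then follows that the basic open set $c_0 \dots c_{k-1} \cdot C^\omega$ is a neighbourhood of $w$ disjoint from $\lang(G,v)$: any $w' \in \lang(G,v)$ shares no such forbidden prefix, since a prefix of a path label is itself the label of a (finite) path from $v$. As $w$ was an arbitrary point of the complement, the complement of $\lang(G,v)$ is open, so $\lang(G,v)$ is closed. I would also remark that $\lang(G,v)$ is nonempty precisely because $G$ is sinkless, so there is no degenerate case to worry about, though this is not needed for closedness.

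The only real subtlety — and the ``main obstacle'', though it is mild — is making the König-lemma step fully rigorous: one must phrase things so that the tree $P$ of label-$w$ finite paths is genuinely finitely branching (which uses that $G$ has finitely many edges, not merely finitely many vertices, although for sinkless graphs on finitely many vertices this is automatic if we also assume finitely many edges, as in the definition of finite graph in the preliminaries) and that its infinite branches correspond exactly to infinite paths from $v$ labelled $w$. Everything else is bookkeeping with the prefix topology on $C^\omega$. I would keep the write-up short, emphasising that finiteness of $G$ is precisely what licenses König's lemma here.
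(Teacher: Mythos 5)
Your proof is correct and follows essentially the same route as the paper: both show the complement is open by observing that if $w$ has no infinite path from $v$ labelled by it, then by König's lemma (using finiteness of $G$) some finite prefix of $w$ already labels no finite path from $v$, and such ``bad prefixes'' generate an open set disjoint from $\lang(G,v)$. The paper merely packages this as the open set $K\cdot C^\omega$ for the language $K$ of prefix-free bad words, while you argue pointwise by contraposition; the content is identical.
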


\begin{proof}
    Consider the language $K \subseteq C^*$ of finite words $w$ such that there is no path from $v$ labelled by $w$.
    By definition, $K C^\omega$ is open.
    Our goal is to show that its complement coincides with $\lang(G,v)$.
    Clearly $\lang(G,v) \cap K\cdot C^\omega=\emptyset$. For the remaining inclusion assume that $w\in C^\omega$ is an infinite word such that $w\notin K\cdot C^\omega$. We want to show that $w\in \lang(G,v)$. The assumption that $w\notin K\cdot C^\omega$ means that every prefix of $w$ labels a finite path from $v$ in~$G$. By applying K\"onig's lemma, relying on the fact that $E(G)$ is finite, we see that $w$ must label an infinite path in $G$ from $v$, thus $w\in \lang(G,v)$.
\end{proof}

We may now give the crucial definition.
Given a prefix-independent objective $W \subseteq C^\omega$, we define its finitary substitute to be
\[
    \fin W = \{w \in C^\omega \mid \text{$w$ labels a path in some finite graph $G$ which satisfies $W$}\}.
\]
Note that $\fin W \subseteq W$ and $\fin W$ is prefix\=/independent (we can prepend an additional finite path to any graph).
Now observe that
\[
    \fin W = \bigcup_{\substack{G \text{ finite graph} \\ \text{$G$ satisfies $W$}}} \lang(G) = \bigcup_{\substack{G \text{ finite graph} \\ G \text{ satisfies } W \\ v \in V(G)}} \lang(G,v),
\]
and since there are (up to isomorphism) only countably many finite graphs, it follows from Lemma~\ref{lem:finite_graph_in_sigma2} that $\fin W \in \bsigma{2}$.

\begin{lemma}\label{lem:substitute_is_equivalent}
    Let $W \subseteq C^\omega$ be a prefix-independent objective which is positional over finite arenas. Then $\fin W \equiv W$.
\end{lemma}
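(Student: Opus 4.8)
The plan is to establish the two directions of finite equivalence separately. Since $\fin W \subseteq W$, the direction ``Eve wins $(A, \fin W)$ implies Eve wins $(A, W)$'' is immediate for any finite arena $A$: a winning strategy for the smaller objective is a fortiori winning for the larger one. So the entire content of the lemma lies in the converse: if Eve wins $(A, W)$ on a finite arena $A$, then she also wins $(A, \fin W)$.

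For the converse, suppose $A$ is a finite $C$-arena on which Eve wins $(A, W)$. By the hypothesis that $W$ is positional over finite arenas, Eve wins $(A,W)$ with a \emph{positional} strategy; let $S$ be the corresponding strategy graph, which we may take to be a subgraph of $A$ (so $V(S) = V(A)$, $E(S) \subseteq E(A)$), and which is finite. By definition of a strategy satisfying $W$, we have $\lang(S) \subseteq W$, i.e.\ $S$ is a finite graph satisfying $W$. Now I claim this same $S$ witnesses that Eve wins $(A, \fin W)$: indeed, every infinite path in $S$ has a label $w \in \lang(S)$, and since $S$ is itself a finite graph satisfying $W$, the word $w$ labels a path in a finite graph satisfying $W$, which is precisely the definition of membership in $\fin W$. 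Hence $\lang(S) \subseteq \fin W$, so $S$ is a winning (positional) strategy for Eve in $(A, \fin W)$.

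Putting the two directions together yields $\fin W \equiv W$. I should also remark that this argument in fact shows $\fin W$ inherits positionality over finite arenas for free (the same positional strategy works), though that is not needed for the statement. The main point to be careful about is simply unpacking the definitions correctly: that a positional strategy graph for Eve on a finite arena is itself a finite graph satisfying $W$, so its language is contained in $\fin W$ essentially by definition. There is no real obstacle here — the lemma is a direct consequence of the definition of $\fin W$ combined with the positionality hypothesis, and the only subtlety is ensuring we use positionality (not just the existence of some winning strategy, whose strategy graph need not be finite in general, although on a finite arena one can always find a finite-memory — indeed positional — one here).

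Thus the proof is essentially two short paragraphs: one recalling $\fin W \subseteq W$ to get one inclusion of the equivalence, and one invoking finite positionality to turn an arbitrary $W$-winning situation into a positional strategy $S$ which, being a finite graph satisfying $W$, has $\lang(S) \subseteq \fin W$. I expect the write-up to be only a few lines.
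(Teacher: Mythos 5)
Your proof is correct and follows exactly the paper's argument: the inclusion $\fin W \subseteq W$ gives one direction, and for the other, positionality over finite arenas yields a finite strategy graph $S$ satisfying $W$, whose language is therefore contained in $\fin W$ by definition. No differences worth noting.
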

    
\begin{proof}
    Let $A$ be a finite $C$-arena.
    Since $\fin W \subseteq W$, it is clear that if Eve wins $(A,\fin W)$, then she wins $(A,W)$.
    Conversely, assume Eve wins $(A,W)$.
    Then she has a positional strategy~$S$ in $A$ which is winning for $W$.
    Since $S$ is a finite graph, it is also winning for $\fin W$ and therefore Eve wins $(A,\fin W)$.
\end{proof}

We should make the following sanity check.

\begin{lemma}\label{lem:wfin-pi}
%If $W$ is prefix-independent and has a weakly neutral letter, then $\fin W$ as well.
If $W$ is prefix-independent, then $\fin W$ as well.
\end{lemma}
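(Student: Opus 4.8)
The plan is to unwind the definition of $\fin W$ and show directly that it is closed under adding and removing finite prefixes, using prefix-independence of $W$ itself together with the fact that finite graphs satisfying $W$ can be freely modified at their ``start''.

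First I would prove $c \cdot \fin W \subseteq \fin W$ for any $c \in C$. Take $w \in \fin W$, so $w$ labels an infinite path $\pi$ from some vertex $v$ in a finite graph $G$ satisfying $W$. I want $cw \in \fin W$. Form a new finite graph $G'$ by adding a fresh vertex $v'$ with a single outgoing edge $v' \re{c} v$; then $G'$ is finite and sinkless. Every infinite path in $G'$ either stays in $G$ (so its label is in $\lang(G) \subseteq W$) or starts at $v'$, takes the edge to $v$, and then follows an infinite path in $G$; its label is $c u$ with $u \in \lang(G) \subseteq W$, hence $cu \in W$ by prefix-independence of $W$. So $G'$ satisfies $W$, and $cw$ labels the path $v' \re c v$ followed by $\pi$, witnessing $cw \in \fin W$.

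Next I would prove the converse, $\fin W \subseteq c \cdot \fin W$ is the wrong direction; what I actually need is: if $cw \in \fin W$ then $w \in \fin W$. Take a finite graph $G$ satisfying $W$ with a path $v \re c v'$ followed by an infinite path $\pi'$ from $v'$ labelled $w$. Then $w$ already labels the infinite path $\pi'$ from $v'$ in the same graph $G$, which satisfies $W$; hence $w \in \fin W$ directly. Combining the two inclusions gives $c \cdot \fin W = \fin W$ for every $c \in C$, which is exactly prefix-independence of $\fin W$.

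I do not expect any real obstacle here; this is a routine sanity check. The only mild subtlety is making sure the auxiliary graph $G'$ stays finite and sinkless (it does, since we add one vertex with one outgoing edge) and that we invoke prefix-independence of $W$ — not of $\fin W$ — when checking that $G'$ satisfies $W$. The ``removing a prefix'' direction needs nothing beyond observing that a suffix of a path in a finite graph is again a path in that same finite graph.

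\begin{proof}
Fix $c \in C$; we show $c\cdot\fin W = \fin W$.

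For the inclusion $c\cdot\fin W \subseteq \fin W$, let $w \in \fin W$, witnessed by an infinite path $\pi$ from a vertex $v$ in a finite graph $G$ satisfying $W$, with label $w$. Build $G'$ from $G$ by adding a fresh vertex $v'$ together with the single edge $v' \re{c} v$; then $G'$ is finite and sinkless. Any infinite path in $G'$ either lies entirely in $G$, and thus has label in $\lang(G) \subseteq W$, or else starts at $v'$, uses the edge $v' \re{c} v$, and continues with an infinite path in $G$; its label is of the form $c u$ with $u \in \lang(G) \subseteq W$, hence $c u \in W$ by prefix-independence of $W$. Therefore $G'$ satisfies $W$, and the path $v' \re{c} v$ followed by $\pi$ has label $c w$, so $c w \in \fin W$.

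For the inclusion $\fin W \subseteq c\cdot\fin W$, let $c w \in \fin W$, witnessed by a finite graph $G$ satisfying $W$ and an infinite path from some vertex $v$ with label $c w$. Write this path as $v \re{c} v'$ followed by an infinite path $\pi'$ from $v'$, which has label $w$. Since $\pi'$ is an infinite path in $G$ and $G$ satisfies $W$, we get $w \in \fin W$.

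Combining the two inclusions, $c\cdot\fin W = \fin W$ for every $c \in C$, i.e. $\fin W$ is prefix-independent.
\end{proof}
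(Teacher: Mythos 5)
Your proof is correct and takes essentially the same route as the paper's: removing a letter is witnessed by the same finite graph (restart the path from a $c$-successor), and adding a letter is witnessed by attaching a fresh vertex with a single outgoing $c$-edge, invoking prefix-independence of $W$ to check the enlarged graph still satisfies $W$. (Strictly speaking the second inclusion you establish is $\fin W \cap cC^\omega \subseteq c\cdot\fin W$ rather than $\fin W \subseteq c\cdot\fin W$, but as your own plan notes this is exactly the intended reading of prefix-independence, and the paper's proof is phrased with the same looseness.)
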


\begin{proof}
Take a letter $c \in C$, we aim to show that $c\fin W = \fin W$.
Let $w \in c\fin W$, and let $G$ be a finite graph satisfying $W$ such that $cw$ labels a path from $v \in V(G)$ in $G$.
Then $w$ labels a~path from a $c$-successor of $v$ in $G$, thus $w \in \fin W$.

Conversely, let $w \in \fin W$, and let $G$ be a finite graph satisfying $W$ such that $w$ labels a~path from $v \in V(G)$ in $G$.
Let $G'$ be the graph obtained from $G$ by adding a fresh vertex~$v'$ with a unique outgoing $c$-edge towards $v$.
Since $W$ is prefix-independent, $G'$ satisfies $W$.
Since $cw$ labels a path from $v'$ in $G'$, it follows that $cw \in \fin W$.
\end{proof}
%Let $\eps$ be a weakly neutral letter of $W$, we prove that it is weakly neutral for $\fin W$ as well.
%Take a word of the form $w'= \eps w_0 \eps w_1 \dots$, where $w_0,w_1, \dots \in C^+$ and let $w=w_0w_1 \dots$, we should prove that
%\[
%    w' \in \fin W \qquad \iff \qquad w \in \fin W.
%\]
%Assume $w \in \fin W$ and let $G$ be a finite graph satisfying $W$ such that $w$ labels a path in $G$.
%Let $G'$ be the graph obtained from $G$ by adding to $G$ distinct paths $v \re{\eps} v_e \re c v'$ for each edge $e=v \re c v' \in E(G)$.
%Since $\eps$ is neutral for $W$, it holds that $G'$ satisfies $W$.
%Now observe that $w'$ labels a path in $G'$, and thus $w' \in \fin W$.

%Conversely, assume $w' \in \fin W$, and let $G'$ be a finite graph satisfying $W$ such that $w$ labels a path in $G'$.
%Then let $G$ be the graph defined over $V(G)=V(G')$ by
%\[
%    E(G')=\{v \re c v' \mid \exists u,u' \in V(G), v \rp{\eps^*} u \re c u' \rp{\eps^*} v' \tin G'\},
%\]
%where $x \rp{\eps^*} x'$ stands for ``there exists a path of $\eps$-edges'' from $x$ to $x'$.
%Since $G'$ satisfies $W$ and $\eps$ is neutral it holds that $G$ satisfies $W$.
%Moreover, $w$ labels a path in $G$ and thus $w \in \fin G$, as required.

We are now ready to prove Theorem~\ref{thm:main}.

\begin{proof}[Proof of Theorem~\ref{thm:main}]
Let $W$ be a prefix-independent objective which is positional over finite arenas and admits a weakly neutral letter $\eps$.
We show that $\fin W$ satisfies the assumptions of Lemma~\ref{lem:automata_to_positionality} which guarantees that it is positional over arbitrary arenas. Since \cref{lem:substitute_is_equivalent} implies that $\fin W\equiv W$, this concludes the proof of Theorem~\ref{thm:main}.

Thanks to Lemma~\ref{lem:finite-structuration}, any finite graph $H$ satisfying $W$ can be embedded into a monotone finite graph $G$ which also satisfies $W$; note that $\lang(H) \subseteq \lang(G)$.
Therefore
\[
    \fin W = \bigcup_{\substack{H \text{ finite graph} \\ H\text{ satisfies W}}} \lang(H) = \bigcup_{\substack{G \text{ finite monotone graph} \\ G\text{ satisfies W}}} \lang(G).
\]
Let $G_0,G_1,\dots$ be an enumeration (up to isomorphism) of all finite monotone graphs satisfying~$W$.
Then consider the automaton $A$ obtained from the disjoint union of the $G_i$'s by adding all normal transitions from $G_i$ to $G_j$ for $i > j$, and saturating with co-Büchi transitions.
The initial state $q_0$ is chosen to be $\max V(G_0)$, the maximal state in $G_0$.
Note that $A$ is countable, monotone, and well founded, so there remains to prove that $\lang(A)=\fin W$ and that $A$ is history\=/deterministic.

%\begin{claimproof}
Clearly for any monotone graph $G$ satisfying $W$, it holds that $\lang(G) \subseteq \lang(A)$, and thus $\fin W \subseteq \lang(A)$.
Conversely, let $w \in \lang(A)$, and consider an accepting path $\pi$ for $W$.
Then eventually, $\pi$ visits only normal edges, and therefore eventually, $\pi$ remains in some $G_i$.
Thus $w=uw'$ with $w' \in \lang(G_i) \subseteq \fin W$, we conclude by prefix-independence of $\fin W$ (Lemma~\ref{lem:wfin-pi}).

To prove that $A$ is history deterministic we now build a resolver: intuitively, we deterministically try to read in $G_0$, then if we fail, go to $G_1$, then $G_2$ and so on.
The fact that reading in each $G_i$ can be done deterministically follows from monotonicity: for each $v \in V(G_i)$ and each $c \in C$, the set $\{v' \in V(G_i) \mid v \re c v' \in E(G_i)\}$ of $c$-successors of $v$ is downward closed.
We let $\delta_i(v,c)$ denote the maximal $c$-successor of $v$ in $G_i$ if it it exists, and $\delta_i(v,c)=\bot$ if $v$ does not have a $c$-successor.
It is easy to see that in a monotone graph $G$, $v \leq v'$ implies $\lang(G,v) \subseteq \lang(G,v')$; in words, more continuations are available from bigger states.

Now we define the resolver $R$ by $V(R)=V(A)$, $r_0=q_0=\max V(G_0)$, and for any $q,q' \in V(A)$ and $c\in C$,
\[
    \begin{array}{rcl}
    q \re c q' \in E(R) &\iff& \exists i, q,q' \in V(G_i) \tand q'=\delta_i(q) \neq \bot \\
    q \reb c q' \in E(R) & \iff & \exists i, q \in V(G_i) \tand \delta_i(q,c) = \bot \tand q' = \max V(G_{i+1}).
    \end{array}
\]
Clearly, $R$ is deterministic and $R \to A$ so $R$ is a~resolver; it remains to prove soundness.
Take $w \in \lang(A)$ and let $i$ such that $w \in \lang(G_i)$ (from some moment on we need to stop using transitions which change components and $\lang(A)=\fin W$).
Let $\pi$ be the unique path from $r_0=\max V(G_0)$ in $R$ labelled by $w$.
We claim that $\pi$ remains in $\bigcup_{j \leq i}V(G_j)$ and thus it can only visit at most $i$ co-Büchi transitions, so it is accepting.
Assume for contradiction that $\pi$ reaches $V(G_{i+1})$.

Then it is of the form $\pi=\pi_0 \pi_1 \dots \pi_i \pi'$ where each $\pi_j$ is a path from $\max(V(G_j))$ in~$G_j$ and $\pi'$ starts from $\max(G_{i+1})$.
Let $w_0, w_1, \dots, w_i$ and $w'$ be the words labelling the paths, so that $w=w_0w_1 \dots w_i w'$.
Denote $q=\max(V(G_i))$.
Then $w_i$ is not a label of a~finite path from $q$ in $G_i$, therefore $w_i w' \notin \lang(G_i,q)=\lang(G_i)$.
At the same time $w \in \lang(G_i)$ thus $q \rp{w_0 \dots w_{i-1}} q' \rp{w_i w'}$ for some $q' \in V(G_i)$.
But then $w_i w' \in L(G_i,q') \subseteq L(G_i,q)$, a~contradiction.
%\end{claimproof}
\end{proof}

\section{Conclusion}

We gave a characterisation of prefix-independent $\bsigma{2}$ objectives which are positional over arbitrary arenas as being those recognised by countable history-deterministic well-founded monotone co-Büchi automata. We moreover deduced that this class is closed by unions.
We proved that, with a proper definition, mean-payoff games are positional over arbitrary arenas.
Finally, we showed that any prefix-independent objective which is positional over finite arenas is finitely equivalent to an objective which is positional over arbitrary arenas.

\paragraph*{Open questions.} There are many open questions on positionality.
Regarding $\bsigma{2}$ objectives, the remaining step would be to lift the prefix-independence assumptions; this requires some new techniques as the proofs presented here do not immediately adapt to this case.

As mentioned in the introduction, Casares~\cite{CasaresThesis} obtained a characterisation of positional $\omega$\=/regular objectives, while we characterised (prefix-independent) $\bsigma{2}$ positional objectives.
A~common generalisation, which we see as a far reaching open question would be to characterise positionality within $\bdelta{3}$; hopefully establishing closure under union for this class.

Another interesting direction would be to understand finite memory for prefix\=/independent $\bsigma{2}$ objectives; useful tools (such as structuration results) are already available~\cite{CO25}.
A related (but independent) path is to develop a better understanding of (non-prefix-independent) closed objectives, which so far has remained elusive.

\clearpage

\bibliographystyle{alphaurl}
\bibliography{bib}

\newpage

\appendix

\section{Proof of the finite structuration result}\label{app:finite-structuration}

We include a proof of Lemma~\ref{lem:finite-structuration}; it is identical to the one in~\cite{OhlmannThesis}; a similar result (with the same proof) also appears in~\cite[Theorem~4.8]{CFGO22}.

\begin{proof}[Proof of Lemma~\ref{lem:finite-structuration}]
    Let $W \subseteq C^\omega$ be a prefix-independent objective admitting a weakly neutral letter $\eps$ and which is positional over finite arenas, and let $G$ be a finite graph satisfying $W$.
    We let $G^\eps$ be a graph obtained from $G$ by saturating it with $\eps$-edges: we successively add arbitrary $\eps$-edges until obtaining a graph satisfying $W$ but such that adding any $\eps$-edge would create a path whose label does not belong to $W$.
    Note that $G \to G^\eps$.
    We claim that the relation $>$ defined by 
    \[
        v > v' \qquad \iff \qquad v \neq v' \tand v \re \eps v' \in E(G^\eps)
    \]
    defines a strict total pre-order over $V(G)$.
    Transitivity is easy to prove: if $v \re \eps v' \re \eps v''$ in $G^\eps$ then adding the edge $v \re \eps v''$ cannot create a path whose label does not belong to $W$.
    
    The difficulty lies in establishing totality.
    Let $v_0 \neq v_1$ be such that neither $v_0 \re \eps v_1$ nor $v_1 \re \eps v_0$ belong to $E(G^\eps)$.
    Then consider the arena $A$ defined by $\VA=V(G)=V(G^\eps)$, $\VE=\{\bullet\}$, where $\bullet \notin V(G)$, and
    \[
        E(A) = E(G^\eps) \cup \{v \re c \bullet \mid \exists p \in \{0,1\}, v \re c v_p \in E(G^\eps)\} \cup \{\bullet \re \eps v_p \mid s \in \{0,p\}\}.
    \]
    Observe that Eve wins the game $(A,W)$, simply by applying the strategy that heads to $v_s$ after entering $\bullet$ via an edge $v \re c \bullet$ such that $v \re c v_s \in E(G^\eps)$ (we abstain from giving a~more formal definition).
    Therefore, Eve wins with a positional strategy $(S,\pi)$; without loss of generality we assume that it chooses $v_0$, meaning that $v_0$ is an $\eps$-successor of $\pi^{-1}(\bullet)$ in $S$.
    
    Call $G'$ the graph obtained from adding the edge $v_1 \re \eps v_0$ to $G^\eps$, we claim that $G'$ satisfies~$W$, which contradicts the fact that $G^\eps$ is saturated with $\eps$-edges.
    Indeed, for any path in~$G'$, we observe that there is a path in $S$ with the same labels: it is enough to replace each subpath of the form $v\re c v_1 \re \eps v_0$ by $v \re c \bullet \re \eps v_0$ (and possibly remove the first edge if it is labelled by $\eps$).
    Since $S$ satisfies $W$ (it is a~winning strategy), and $\eps$ is neutral, we conclude that $G'$ satisfies $W$ and thus the edge $v_1 \re \eps v_0$ should belong to $E(G^\eps)$. Therefore, $>$ is a~strict pre-order.

    We now let $\bar G$ be the graph defined over $V(\bar G)=V(G)=V(G^\eps)$ by 
    \[
        E(\bar G) = \{v \re c v' \mid \exists u,u', v \rp {\eps^*} u \re c u' \rp {\eps^*} v' \tin G^\eps\}.
    \]
    It is a direct check that $\bar G$ satisfies $W$ (by neutrality of $\eps$) and that it is monotone with respect to the strict pre-order $>$ (by definition).
    Note that $G^\eps \to \bar G$.

    Now we would like to turn $>$ into a~linear order by combining it's equivalence classes. For that, observe that vertices $v \neq v'$ such that $v > v'$ and $v' > v$, have identical incoming and outgoing edges in $\bar G$.
    Therefore, the graph $G'$ defined over $V(G') = V(G) / \sim$ where $v \sim v' \iff v>v'>v$ by
    \[
        E(G') = \{[v] \re c [v'] \mid v \re c v' \tin E(\bar G)\},
    \]
    makes sense, $v \mapsto [v]$ defines a~morphism $\bar G \to G'$, and $>$ induces a total strict order over $V(G')$ which makes it monotone.
    This concludes the proof.
\end{proof}

\end{document}